\newcommand{\td}[1]{\textcolor{blue}{\ifmmode \text{[#1]}\else [#1] \fi}}
\newcommand{\tv}[1]{\textcolor{magenta}{\ifmmode \text{[TV: #1]}\else [TV: #1] \fi}}
\newcommand{\vh}[1]{\textcolor{cyan}{\ifmmode \text{[VH: #1]}\else [VH: #1] \fi}}
\newcommand{\lh}[1]{\textcolor{green}{\ifmmode \text{[LH: #1]}\else [LH: #1] \fi}}
\newcommand{\rem}[1]{\textcolor{red}{\ifmmode \text{[#1]}\else [#1] \fi}}
\newcommand{\bA}[0]{\boldsymbol{A}}
\newcommand{\bE}[0]{\boldsymbol{E}}
\newcommand{\bI}[0]{\boldsymbol{I}}
\newcommand{\real}[0]{\mathbb{R}}
\newcommand{\nat}[0]{\mathbb{N}}
\newcommand{\symdiff}[0]{\mathop{\bigtriangleup}}
\newcommand{\distr}[0]{\mathrm{Pr}}
\newcommand{\distrP}[0]{\distr_P}
\newcommand{\distan}[0]{d}
\newcommand{\distanP}[0]{\distan_P}
\newcommand{\distanPexp}[0]{\distan_{\pexp}}
\newcommand{\lang}[0]{L}
\newcommand{\langof}[1]{\lang(#1)}
\newcommand{\langinof}[2]{\lang_{#1}(#2)}
\newcommand{\blang}[0]{\lang^{\flat}}
\newcommand{\blanginof}[2]{\blang_{#1}(#2)}
\newcommand{\range}[2]{\langle #1, #2\rangle}
\newcommand{\trans}[0]{\delta}
\newcommand{\statesof}[1]{Q[#1]}
\newcommand{\initof}[1]{I[#1]}
\newcommand{\finof}[1]{F[#1]}
\newcommand{\transof}[1]{\trans[#1]}
\newcommand{\pinit}[0]{\boldsymbol{\alpha}}
\newcommand{\pinitof}[1]{\pinit[#1]}
\newcommand{\pfin}[0]{\boldsymbol{\gamma}}
\newcommand{\pfinof}[1]{\pfin[#1]}
\newcommand{\ptrans}[0]{\boldsymbol{\Delta}}
\newcommand{\ptransa}[0]{\ptrans_a}
\newcommand{\ptransaof}[2]{\ptransa[#1, #2]}
\newcommand{\supp}[0]{\mathit{supp}}
\newcommand{\suppof}[1]{\supp(#1)}
\newcommand{\trim}[0]{\mathit{trim}}
\newcommand{\trimof}[1]{\trim(#1)}
\newcommand{\sr}[0]{\rho}
\newcommand{\srof}[1]{\sr(#1)}
\newcommand{\runsto}[0]{\leadsto}
\newcommand{\runstoover}[1]{\stackrel{#1}{\runsto}}
\newcommand{\ltr}[1]{\xrightarrow{#1}} 
\newcommand{\restr}[2]{#1_{|#2}}    
\newcommand{\reach}[0]{\mathit{reach}}       
\newcommand{\reachof}[1]{\reach(#1)} 
\newcommand{\pspace}[0]{\textbf{PSPACE}}          
\newcommand{\npspace}[0]{\textbf{NPSPACE}}        
\newcommand{\ptime}[0]{\textbf{PTIME}}            
\newcommand{\nc}[0]{\textbf{NC}}                  
\newcommand{\dunion}[0]{\mathop{\uplus}}          
\newcommand{\onevec}[0]{\boldsymbol{1}}           
\newcommand{\vecof}[1]{[#1]}                      
\newcommand{\bigO}[0]{\mathcal{O}}                
\newcommand{\bigOof}[1]{\bigO(#1)}
\newcommand{\errfunc}[0]{\mathit{error}}
\newcommand{\errfuncof}[3]{\errfunc(#1,#2,#3)}
\newcommand{\errfuncP}[0]{\errfunc_p}
\newcommand{\errfuncPof}[3]{\errfuncP(#1,#2,#3)}
\newcommand{\errfuncSL}[0]{\errfunc_{\mathit{sl}}}
\newcommand{\errfuncSLof}[3]{\errfuncSL(#1,#2,#3)}
\newcommand{\stlab}[0]{\ell}
\newcommand{\stlabof}[1]{\stlab(#1)}
\newcommand{\stlabP}[0]{\stlab_p}
\newcommand{\stlabSL}[0]{\stlab_{\mathit{sl}}}
\newcommand{\reduce}[0]{\mathit{reduce}}
\newcommand{\reduceof}[2]{\reduce(#1,#2)}
\newcommand{\reduceP}[0]{\reduce_p}
\newcommand{\reducePof}[2]{\reduceP(#1,#2)}
\newcommand{\reduceSL}[0]{\reduce_{\mathit{sl}}}
\newcommand{\reduceSLof}[2]{\reduceSL(#1,#2)}
\newcommand{\labfunc}[0]{\mathit{label}}
\newcommand{\labfuncof}[2]{\labfunc(#1, #2)}
\newcommand{\labfuncP}[0]{\labfunc_p}
\newcommand{\labfuncSL}[0]{\labfunc_{\mathit{sl}}}
\newcommand{\selfloop}[0]{\mathit{sl}}
\newcommand{\selfloopof}[2]{\selfloop(#1,#2)}
\newcommand{\weight}[2]{\mathit{weight}_{#1}(#2)}
\newcommand{\downclosof}[1]{\lfloor #1 \rfloor}
\newcommand{\downclosPof}[1]{\downclosof{#1}_p}
\newcommand{\downclosSLof}[1]{\downclosof{#1}_\mathit{sl}}
\newcommand{\ordP}[0]{\mathrel{\sqsubseteq_p}}
\newcommand{\ordSL}[0]{\mathrel{\sqsubseteq_\mathit{sl}}}
\newcommand{\appreal}[0]{\textsc{Appreal}}
\newcommand{\snort}[0]{\textsc{Snort}}
\newcommand{\netbench}[0]{\textsc{Netbench}}
\newcommand{\alergia}[0]{\textsc{Alergia}}
\newcommand{\reducetool}[0]{\textsc{Reduce}}
\newcommand{\linux}[0]{\textsc{Linux}}
\newcommand{\debian}[0]{\textsc{Debian}}
\newcommand{\cone}[0]{\textbf{C1}}
\newenvironment{condition}[1]{%
\smallskip
\par
\noindent
\textbf{Condition #1}%
\it%
}
{}
\newcommand{\ordalglab}[0]{\mathrel{\preceq}_{A,\labfuncof A P}}
\newcommand{\ordalglabgrc}[0]{\mathrel{\preceq}_{A,\stlabSL^2}}
\newcommand{\ared}[0]{A^{\text{\textsc{Red}}}}
\newcommand{\aapp}[0]{A^{\text{\textsc{App}}}}
\newcommand{\amal}[0]{A_{\text{\texttt{mal}}}}
\newcommand{\aatt}[0]{A_{\text{\texttt{att}}}}
\newcommand{\abac}[0]{A_{\text{\texttt{bd}}}}
\newcommand{\amalp}[0]{\amal'}
\newcommand{\aattp}[0]{\aatt'}
\newcommand{\abacp}[0]{\abac'}
\newcommand{\amalred}[0]{\amal^{\text{\textsc{Red}}}}
\newcommand{\amalapp}[0]{\amal^{\text{\textsc{App}}}}
\newcommand{\aattred}[0]{\aatt^{\text{\textsc{Red}}}}
\newcommand{\aattapp}[0]{\aatt^{\text{\textsc{App}}}}
\newcommand{\abacred}[0]{\abac^{\text{\textsc{Red}}}}
\newcommand{\abacapp}[0]{\abac^{\text{\textsc{App}}}}
\newcommand{\phttp}[0]{P_{\mathit{HTTP}}}
\newcommand{\timeof}[1]{\mathit{time}(#1)}
\let\mc\multicolumn                      
\newcommand{\lutsof}[1]{\mathit{LUTs}(#1)}
\newcommand{\url}[1]{\texttt{#1}}
\newcommand{\pprod}[0]{\mathop{\oplus}}
\newcommand{\pexp}[0]{P_{\mathit{Exp}}}
\newcommand{\qnew}[0]{q_{\mathit{new}}}
\newcommand{\expl}[1]{\mbox{\{#1\}}}
\newcommand{\syslin}[0]{\mathcal{S}}
\newcommand{\meq}[0]{M_{\mathit{eq}}}
\newcommand{\msyslin}[0]{M_{\mathit{SysLin}}}
\newcommand{\unkn}[1]{\xi{\scriptstyle[#1]}}
\title{
  Approximate Reduction of Finite Automata for High-Speed Network Intrusion Detection\\ (Technical Report)
\vspace{-3mm}
}
\author{
  Milan \v{C}e\v{s}ka,
  Vojt\v{e}ch Havlena,
  Luk\'{a}\v{s} Hol\'{i}k,
  Ond\v{r}ej Leng\'{a}l, and
  Tom\'{a}\v{s} Vojnar
\vspace{-1mm}
}
\institute{
  FIT, Brno University of Technology, IT4Innovations Centre of Excellence, Czech~Republic
\vspace{-7mm}
}
\begin{document} 

\maketitle

\begin{abstract} We consider the problem of \emph{approximate reduction of
non-de\-ter\-mi\-ni\-stic automata} that appear in hardware-accelerated network
intrusion detection systems (NIDSes).
We define an error \emph{distance} of a reduced automaton from the original one
as the probability of packets being incorrectly classified by the reduced automaton
(wrt the probabilistic distribution of packets in the network traffic).
We use this notion to design an \emph{approximate reduction procedure} that
achieves a~great size reduction (much beyond the state-of-the-art language
preserving techniques) with a~controlled and small error.
We have implemented our approach and evaluated it on use cases from \snort{},
a~popular NIDS.
Our results provide experimental evidence that the method can be highly
efficient in practice, allowing NIDSes to follow the rapid growth in the speed
of~networks. \end{abstract}

\vspace{-9.0mm}
\section{Introduction}\label{sec:label}
\vspace{-3.0mm}

%
%

The recent years have seen a~boom in the number of security incidents in
computer networks.  In order to alleviate the impact of network attacks and
intrusions, Internet  providers want to detect malicious traffic at their
network's entry points and on the backbones between sub-networks.
Software-based network intrusion detection systems (NIDSes), such as the popular
open-source system \snort~\cite{snort}, are capable of detecting suspicious
network traffic by testing (among others) whether a packet payload matches a regular expression (regex)
describing known patterns of malicious traffic. NIDSes collect and maintain vast
databases of such regexes that are typically divided into groups
according to types of the attacks and target protocols.

\emph{Regex matching} is the most computationally demanding task of a~NIDS as
its cost grows with the speed of the network traffic as well as with the number
and complexity of the regexes being matched. The current software-based
NIDSes cannot perform the regex matching on networks beyond
1\,Gbps~\cite{Becchi2009,Korenek2007}, so they cannot handle the current speed of
backbone networks ranging between tens and hundreds of~Gbps. A promising
approach to speed up NIDSes is to (partially) offload regex matching into
hardware~\cite{Korenek2007,Kastil2009,Matousek2016}. The hardware then serves as
a~pre-filter of the network traffic, discarding the majority of the packets from further
processing.  Such pre-filtering can easily reduce the traffic the NIDS needs to
handle by two or three orders of magnitude~\cite{Korenek2007}.

\enlargethispage{4mm}

Field-programmable gate arrays (FPGAs) are the leading technology in
high-\linebreak throughput regex matching.
%
%
%
Due to their inherent parallelism, FPGAs provide an efficient way of
implementing \emph{nondeterministic finite automata} (NFAs), which naturally
arise from the input regexes.
%
%
%
%
%
Although the amount of available resources in FPGAs is continually increasing,
%
%
%
the speed of networks grows even faster.
%
%
%
Working with multi-gigabit networks requires the hardware to use many parallel
packet processing branches in a single FPGA~\cite{Matousek2016}; each of them
implementing a separate copy of the concerned~NFA, and so reducing the size of
the NFAs is of the utmost importance. 
Various language-preserving automata reduction approaches exist, mainly based
on computing (bi)simulation relations on automata states (cf. the related work). 
The reductions they offer, however, do not satisfy the needs of high-speed
hardware-accelerated NIDSes.

Our answer to the problem is \emph{approximate reduction} of NFAs, allowing for
a~trade-off between the achieved reduction and the precision of the regex
matching.
To formalise the intuitive notion of precision, we propose a novel
\emph{probabilistic distance} of automata.
It captures the probability that a packet of the input network traffic is
incorrectly accepted or rejected by the approximated NFA.
The distance assumes a~\emph{probabilistic model} of the network traffic (we
show later how such a model can be obtained).

Having formalised the notion of precision, we specify the target of our
reductions as two variants of an optimization problem: 
(1) minimizing the NFA size given the maximum allowed error (distance from the
original), or (2) minimizing the error given the maximum allowed NFA size.
Finding such optimal approximations is, however, computationally hard
(\pspace{}-complete, the same as precise NFA minimization).

Consequently, we sacrifice the optimality and, motivated by the typical
structure of NFAs that emerge from a set of regexes used by NIDSes (a~union of many long 
``tentacles''
%
%
with occasional small strongly-connected components), we limit the space of
possible reductions by restricting the set of operations they can apply to the
original automaton.
Namely, we consider two reduction operations: 
\begin{inparaenum}[(i)] \item collapsing the future of a state into a
\emph{self-loop} (this reduction over-approximates the language), or
\item \emph{removing states} (such a~reduction is under-approximating).
\end{inparaenum}
%
%

The problem of identifying the optimal sets of states on which these operations
should be applied is still \pspace{}-complete.
The restricted problem is, however, more amenable to an approximation by
a~\emph{greedy algorithm}.
The algorithm applies the reductions state-by-state in an order determined by a
precomputed \emph{error labelling} of the states.
The process is stoppped once the given optimization goal in terms of the size or
error is reached.
The labelling is based on the probability of packets that may be accepted
through a given state and hence over-approximates the error that may be caused
by applying the reduction at a given state.
As our experiments show, this approach can give us high-quality reductions while
ensuring formal error bounds.

Finally, it turns out that even the pre-computation of the error labelling of
the states is costly (again \pspace{}-complete).
Therefore, we propose several ways to cheaply over-approximate it such that the
strong error bound guarantees are still preserved.
Particularly, we are able to exploit the typical structure of the ``union of
tentacles'' of the hardware NFA in an algorithm that is exponential in the size
of the largest ``tentacle'' only, which is indeed much faster in practice.

We have implemented our approach and evaluated it on regexes used to classify
malicious traffic in \snort{}. 
We obtain quite encouraging experimental results demonstrating that our approach
provides a much better reduction than language-preserving techniques with an
almost negligible error.
In particular, our experiments, going down to the level of an actual
implementation of NFAs in FPGAs, confirm that we can squeeze into an up-to-date
FPGA chip real-life regexes encoding malicious traffic, allowing them to be used
with a negligible error for filtering at speeds of 100\,Gbps (and even
400\,Gbps).
This is far beyond what one can achieve with current exact reduction approaches.

\enlargethispage{4mm}

\vspace{-3mm}
\paragraph{Related Work}
Hardware acceleration for regex matching at the line rate is an
intensively studied technology that uses general-purpose hardware
\cite{DBLP:conf/sigcomm/KumarDYCT06,DBLP:conf/isca/TanS05,DBLP:conf/ancs/KumarTW06,DBLP:conf/conext/BecchiC07,DBLP:conf/ancs/BecchiC07,DBLP:conf/ancs/KumarCTV07,DBLP:conf/ancs/YuCDLK06,liu_fast,luchaup2014deep}
as well as
FPGAs~\cite{mitra_compiling,brodie_scalable,clark_efficient,hutchings_assign,sidhu_fast,Pus2011,Korenek2007,Kastil2009,Matousek2016}.
Most of the works focus on DFA implementation and optimization
techniques.
NFAs can be exponentially smaller than DFAs but need, in the worst case,
$\bigOof n$ memory accesses to process each byte of the payload where $n$~is the
number of states.
In most cases, this incurs an unacceptable slowdown.
Several works alleviate this disadvantage of NFAs by exploiting
reconfigurability and fine-grained parallelism of FPGAs,
allowing one to process one character per clock cycle (e.g.
\cite{mitra_compiling,brodie_scalable,sidhu_fast,Pus2011,Korenek2007,Kastil2009,Matousek2016}).

In \cite{luchaup2014deep}, which is probably the closest work to ours, 
the authors consider a set of regexes describing network attacks.
They replace a potentially prohibitively large DFA by a tree of smaller DFAs, 
an alternative to using NFAs
that minimizes the latency occurring in a non-FPGA-based implementation.
The language of every DFA-node in the tree over-approximates the languages of its children.
Packets are filtered through the tree from the root downwards until they belong
to the language of the encountered nodes, and may be finally accepted at the
leaves, or are rejected otherwise.
The over-approximating DFAs are constructed using a similar notion of
probability of an occurrence of a state as in our approach.
The main differences from our work are that 
(1) the approach targets approximation of DFAs (not NFAs),
(2) the over-approximation is based on a given traffic sample only (it cannot
benefit from a probabilistic model), and 
(3) no probabilistic guarantees on the approximation error are provided.

Approximation of DFAs was considered in various other contexts.
Hyper-mi\-ni\-mi\-za\-tion is an approach that is allowed to alter language
membership of a finite set of words~\cite{maletti_hypermini,jez_hypermini}.
A DFA  with a given maximum number of states is constructed in~\cite{ganty_budget},
minimizing the error defined either by (i) counting prefixes of misjudged words up to some length, 
or (ii) the sum of the probabilities of the misjudged words wrt the Poisson
distribution over $\Sigma^*$.
Neither of these approaches considers reduction of NFAs nor 
allows to control the expected error with respect to the real traffic.
%

In addition to the metrics mentioned above when discussing the works
\cite{ganty_budget,maletti_hypermini,jez_hypermini}, the following metrics
should also be mentioned.
The Cesaro-Jaccard distance studied in~\cite{parker_cj_distance} is, in
spirit, similar to \cite{ganty_budget} and does also not reflect the probability
of individual words.
The edit distance of weighted automata from \cite{mohri_edit_distance} depends
on the minimum edit distance between pairs of words from the two compared
languages, again regardless of their statistical significance.
None of these notions is suitable for our needs.


Language-preserving minimization of NFAs is a~\pspace{}-complete problem \cite{Jiang1993,malcher_minimizing_2004}.
More feasible (polynomial-time) is language-preserving
size reduction of NFAs based on (bi)simulations
\cite{hopcroft_nlogn_1971,paige_three_1987,bustan_simulation_2003,Champarnaud2004},
which does not aim for a truly minimal NFA.
A~number of advanced variants exist, based on multi-pebble or look-ahead
simulations, or on combinations of forward and backward
simulations~\cite{mayr_advanced_2013,etessami_hierarchy,lorenzo_advanced}.
%
The practical efficiency of these techniques is, however, often insufficient to
allow them to handle the large NFAs that occur in practice and/or they do not
manage to reduce the NFAs enough.
Finally, even a~minimal NFA for the given set of regexes
is often too big to be implemented in the given FPGA operating on the required
speed (as shown even in our experiments). 
Our approach is capable of a~much better reduction for the price of a~small
change of the accepted language.


\vspace{-3.0mm}
\section{Preliminaries}\label{sec:label}
\vspace{-2.0mm}

We use $\range a b$ to denote the set $\{x \in \real \mid a \leq x \leq b\}$ and $\nat$ to denote the set $\{ 0,1,2,\dots \}$.
Given a pair of sets $X_1$ and $X_2$, we use $X_1 \symdiff X_2$ to denote their
\emph{symmetric difference}, i.e., the set $\{x \mid \exists! i \in \{1,2\} : x
\in X_i\}$.
We use the notation $\vecof{v_1, \ldots, v_n}$ to denote a~vector of $n$~elements,
$\onevec$ to denote the all 1's vector~$\vecof{1, \ldots, 1}$,
$\bA$ to denote a~matrix, and $\bA^\top$ for its transpose, and $\bI$ for the
identity matrix.


In the following, we fix a~finite non-empty alphabet~$\Sigma$.
A~\emph{nondeterministic finite automaton} (NFA) is a~quadruple
$A = (Q, \trans, I, F)$ where $Q$ is
a~finite set of states, $\trans: Q\times\Sigma \to 2^Q$ is
a~transition function, $I \subseteq Q$ is a~set of initial states, and
$F\subseteq Q$ is a~set of accepting states.
We use $\statesof A, \transof A, \initof A$, and~$\finof A$ to denote $Q,
\trans, I$, and $F$, respectively, and $q \ltr a q'$ to denote that $q' \in
\trans(q, a)$.
A~sequence of states $\rho = q_0 \cdots q_n$ is a~\emph{run} of $A$ over
a~word~$w = a_1 \cdots a_n \in \Sigma^*$ from a~state~$q$ to a~state $q'$,
denoted as $q \runstoover{w, \rho} q'$, if $\forall 1 \leq i \leq n: q_{i-1}
\ltr{a_i} q_i$, $q_0 = q$, and $q_n = q'$.
Sometimes, we use~$\rho$ in set operations where it behaves as the set of
states it contains.
We also use $q \runstoover w q'$ to denote that $\exists \rho \in Q^*: q
\runstoover{w,\rho} q'$
and $q \runsto q'$ to denote that $\exists w: q \runstoover w q'$.
The \emph{language} of a~state~$q$ is defined as
$\langinof A q = \{w \mid \exists q_F \in F: q \runstoover w q_F\}$
and its \emph{banguage} (back-language) is defined as
$\blanginof A q = \{w \mid \exists q_I \in I: q_I \runstoover w q\}$.
Both notions can be naturally extended to a~set $S \subseteq Q$: $\langinof A S
= \bigcup_{q\in S} \langinof A q$ and $\blanginof A S
= \bigcup_{q\in S} \blanginof A q$.
We drop the subscript $A$ when the context is obvious.
$A$~\emph{accepts} the language $\langof A$  defined as $\langof A = \langinof A
I$.
$A $ is called \emph{deterministic} (DFA) if $|I| = 1$ and $\forall q\in Q$ and
$\forall a\in\Sigma:$ $|\trans(q,a)|\leq 1$,
and \emph{unambiguous} (UFA) if $\forall w \in \langof A: \exists! q_I \in I,
\rho \in Q^*, q_F \in F: q_I \runstoover{w, \rho} q_F$.

The \emph{restriction} of~$A$ to $S \subseteq Q$ is an~NFA~$\restr A S$
given as~$\restr A S = (S, \trans \cap (S \times \Sigma \times 2^S),
I \cap S, F \cap S)$.
We define the \emph{trim} operation as $\trimof{A} = \restr{A}{C}$ 
where $C = \{ q \mid \exists q_I\in I, q_F\in F: q_I \runsto q \runsto q_F \}$.
For a~set of states $R \subseteq Q$, we use $\reachof R$ to denote the set of
states reachable from~$R$, formally, $\reachof R = \{r' \mid \exists r \in R: r
\runsto r'\}$.
We use the number of states as the measurement of the size of~$A$, i.e., $|A|
= |Q|$.

A (discrete probability) \emph{distribution} over a set $X$ is a~mapping
$\distr: X \to \range 0 1$ such that $\sum_{x \in X} \distr(x) = 1$.
An~$n$-state \emph{probabilistic automaton} (PA) over $\Sigma$ is a triple
$P = (\pinit, \pfin, \{\ptransa\}_{a \in \Sigma})$
where
$\pinit \in \range{0}{1}^n$ is a vector of \emph{initial weights}, 
$\pfin \in \range{0}{1}^n$ is a vector of \emph{final weights}, and
for every $a \in \Sigma$,
$\ptransa \in \range{0}{1}^{n\times n}$ is a \emph{transition matrix} for
symbol~$a$.
We abuse notation and use $\statesof P$ to denote the set of states
$\statesof P = \{1, \ldots, n\}$.
Moreover, the following two properties need to hold:
\begin{inparaenum}[(i)]
  \item  $\sum \{\pinitof i \mid i \in \statesof P\} = 1$ (the initial
    probability is~1) and
  \item  for every state $i \in \statesof P$ it holds that
    $\sum\{\ptransaof i j \mid j\in \statesof P, a\in \Sigma\} + \pfinof i = 1$
    (the probability of accepting or leaving a state is~1).
\end{inparaenum}
We define the \emph{support} of $P$ as the NFA $\suppof P = (\statesof P,
\transof P, \initof P, \finof P)$~s.t.
%
\vspace{-1mm}
\begin{align*}
  \transof P  &= \{(i, a, j) \mid \ptransaof i j > 0\} &
  \initof P   &= \{i \mid \pinit[i] > 0\} &
  \finof P    &= \{i \mid \pfin[i] > 0\}.\\[-6mm]
\end{align*}
Let us assume that every PA~$P$ is such that~$\suppof P = \trimof{\suppof P}$.
For a~word~$w = a_1 \ldots a_k \in \Sigma^*$, we use $\ptrans_w$ to denote the
matrix~$\ptrans_{a_1} \cdots \ptrans_{a_k}$.
It can be easily shown
that~$P$ represents a distribution over words~$w \in
\Sigma^*$ defined as
$\distrP(w) = \pinit^\top \cdot \ptrans_w \cdot \pfin$.
We call~$\distrP(w)$ the \emph{probability} of~$w$ in~$P$.
Given a~language $L \subseteq \Sigma^*$, we define the probability of~$L$
in~$P$ as~$\distrP(L) = \sum_{w \in L} \distrP(w)$.
%
%

If Conditions (i) and (ii) from the definition of PAs are dropped, we speak
about a~\emph{pseudo-probabilistic automaton (PPA)}, which may assign a word from
its support a~quantity that is not necessarily in the range $\range{0}{1}$,
denoted as the~\emph{significance} of the word below.
PPAs may arise during some of our operations performed on PAs.

\enlargethispage{3mm}

\vspace{-3.0mm}
\section{Approximate Reduction of NFAs}\label{sec:label}
\vspace{-2.0mm}

In this section, we first introduce the key notion of our approach:
a~\emph{probabilistic distance} of a~pair of finite automata wrt a~given
probabilistic automaton that, intuitively, represents the significance of
particular words. We discuss the complexity of computing the probabilistic
distance. Finally, we formulate two problems of \emph{approximate automata
reduction via probabilistic distance}.

\vspace{-3.0mm}
\subsection{Probabilistic Distance}\label{sec:prob-distance}
\vspace{-1.5mm}

We start by defining our notion of a probabilistic distance of two NFAs.
Assume NFAs $A_1$ and $A_2$
and
a~probabilistic automaton~$P$ specifying the distribution~$\distrP: \Sigma^* \to
\range 0 1$.
The \emph{probabilistic distance} $\distanP(A_1,A_2)$ between $A_1$ and $A_2$
wrt $\distrP$ is defined~as
%
\begin{equation*}
  \distanP(A_1,A_2) = \distrP(\langof{A_1} \symdiff
  \langof{A_2}).
\end{equation*}
Intuitively, the distance captures the significance of the words accepted by one
of the automata only.
We use the distance to drive the reduction process towards automata with small
errors and to assess the quality of the resulting automata.

The value of $\distrP(\langof{A_1} \symdiff \langof{A_2})$ can be computed as
follows.
Using the fact that (1) $L_1 \symdiff L_2 =  (L_1 \setminus L_2) \dunion (L_2
\setminus L_1)$ and (2) $L_1 \setminus L_2 = L_1 \setminus (L_1 \cap L_2)$, we
get
\begin{equation*}
  \begin{split}
  \distanP(A_1,A_2) &=
    \distrP(\langof{A_1} \setminus \langof{A_2}) +
    \distrP(\langof{A_2} \setminus \langof{A_1}) \\
  &=
    \distrP(\langof{A_1} \setminus (\langof{A_1}\cap \langof{A_2})) +
    \distrP(\langof{A_2} \setminus (\langof{A_2}\cap \langof{A_1})) \\
    &=
      \distrP(\langof{A_1}) +
      \distrP(\langof{A_2}) -
      2\cdot\distrP(L(A_1) \cap L(A_2)).
  \end{split}
\end{equation*}
Hence, the key step is to compute $\distrP(L(A))$ for an NFA~$A$ and a PA
$P$. Problems similar to computing such a probability have been extensively
studied in several contexts including verification of probabilistic
systems~\cite{Vardi1985,Baier2016,Baier2016A}.
The below lemma summarises the complexity of this step.


\begin{restatable}{lemma}{polytimeLangProb}\label{lem:polytime-lang-prob}
Let $P$ be a PA and $A$ an NFA.
The problem of computing $\distrP(\langof{A})$ is \pspace{}-complete.
For a~UFA $A$, $\distrP(\langof{A})$ can be computed~in~\ptime{}.
\end{restatable}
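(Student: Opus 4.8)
The plan is to split the statement into three pieces: the \ptime{} algorithm for UFAs, \pspace{}-membership for general NFAs, and \pspace{}-hardness. I would start with the easiest and most instructive case, the UFA bound, because it suggests the right linear-algebraic machinery. The key observation is that for an unambiguous $A$, every accepted word has exactly one accepting run, so the acceptance indicator factorises cleanly through the product construction. Concretely, I would form the product $P \prd A$ of the PA $P$ with the support/structure of $A$: a state is a pair $(i,q)$ with $i \in \statesof P$, $q \in \statesof A$, and the transition weight on symbol $a$ is $\ptransaof i j$ when $q \ltr a q'$ in $A$ (and $0$ otherwise). Because $A$ is unambiguous, the total weight this product assigns to a word $w$ equals $\distrP(w)$ if $w \in \langof A$ and $0$ otherwise, with no double-counting from multiple runs. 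Thus $\distrP(\langof A) = \sum_{w} (\text{weight of } w \text{ in } P \prd A)$, and this infinite sum over all words collapses to a single closed form via the matrix-geometric-series identity familiar from absorbing Markov chains.

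For that closed form I would let $\bM = \sum_{a \in \Sigma} \bM_a$ be the combined transition matrix of the product PPA, with initial vector $\pinit'$ and final vector $\pfin'$ obtained from the product construction. The total significance summed over all words is then $\pinit'^\top \bigl(\sum_{k \geq 0} \bM^k\bigr) \pfin' = \pinit'^\top (\bI - \bM)^{-1} \pfin'$, provided the series converges. Here I would need to argue convergence: since $\suppof P$ is trimmed and the PA conditions force the row sums of $P$'s transition-plus-final weights to be $1$, the spectral radius of $\bM$ is strictly below $1$ (the final weights bleed off probability mass), so $\bI - \bM$ is invertible and the series converges. Computing $\distrP(\langof A)$ then reduces to solving the linear system $(\bI - \bM)\mathbf{x} = \pfin'$ and taking the inner product with $\pinit'$, which is polynomial in $|P| \cdot |A|$ by Gaussian elimination. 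The main technical care here is the unambiguity argument ensuring the product weight equals exactly $\distrP(w)\cdot[w \in \langof A]$.

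For general NFAs the difficulty is precisely that the product may count a word multiple times (once per accepting run), so the naive product gives the wrong value. The standard fix is to first determinise $A$ (or take an unambiguous equivalent), then run the UFA algorithm; determinisation costs exponential space but can be performed on-the-fly, yielding a \pspace{} (equivalently \npspace{} by Savitch, or via the fact that the linear algebra can be done in polynomial space using the implicit exponential-size matrices) procedure. I would phrase \pspace{}-membership by observing that the required quantity is a ratio/value expressible over the exponentially-many subset-construction states, and that such linear-algebraic quantities over a polynomial-space-describable matrix are computable in \pspace{}; I expect to invoke known results on probabilistic model checking (the cited \cite{Vardi1985,Baier2016,Baier2016A}) rather than re-deriving the space-bounded linear algebra from scratch.

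For \pspace{}-hardness, the natural route is a reduction from NFA universality (or from the complement emptiness problem), which is the canonical \pspace{}-complete problem for NFAs. I would take $P$ to be the uniform-length or uniform-distribution PA over $\Sigma^*$ (or a suitable PA whose support is $\Sigma^*$) and observe that $\distrP(\langof A)$ attains its maximal possible value exactly when $A$ is universal; more precisely, deciding whether $\distrP(\langof A)$ equals a specific threshold encodes whether $A$ accepts all words of each length, hence universality. The hard part will be setting up the PA so that the probability value exactly and verifiably discriminates universality from non-universality — one must ensure the probability is a computable rational threshold that is hit iff $\langof A = \Sigma^*$, controlling for the fact that a single rejected word of high probability must move the value detectably. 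I anticipate this calibration of the distribution (choosing final weights so that each word carries a known rational mass, and arguing the threshold is reached iff no word is missing) to be the most delicate step, whereas the two algorithmic directions are comparatively routine linear algebra plus an appeal to existing probabilistic-verification complexity results.
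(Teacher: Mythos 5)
Your proposal is correct and follows essentially the same route as the paper: the product construction plus the $(\bI - \bE)^{-1}$ geometric-series computation for UFAs, determinisation followed by polynomial-space linear algebra over the implicitly represented exponential-size system for \pspace{} membership, and a reduction from NFA universality via a PA with full support (the paper's $\pexp$) for hardness. The only remark worth making is that the ``calibration'' you flag as the most delicate step is in fact immediate---once every word has positive probability, $\distrP(\langof A) = 1$ iff $\langof A = \Sigma^*$---while the space-bounded linear algebra you defer to citations is what the paper actually spells out (Csanky's \nc{} algorithm for linear systems, Fortune's space-bounded simulation of parallel machines, and the standard composition of space-bounded transducers).
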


In our approach, we apply the method of~\cite{Baier2016A} and
compute $\distrP(L(A))$ in the following way. We first check whether the NFA $A$ is
unambiguous.
This can be done by using the standard product construction (denoted as $\cap$)
for computing the intersection of the NFA $A$ with itself and trimming the result,
formally $B = \trim(A \cap A)$,
followed by
a~check whether there is some state $(p,q)\in Q[B]$ s.t. $p \neq q$
\cite{Mohri2012}.
If $A$ is ambiguous, we either determinise it or disambiguate it
\cite{Mohri2012}, leading to a DFA/UFA $A'$, respectively.%
\footnote{In theory, disambiguation can produce smaller automata, but, in our
experiments, determinisation proved to work better.}
Then, we construct the trimmed product of $A'$ and $P$ (this can be seen as
computing $A' \cap \suppof P$ while keeping the probabilities from~$P$ on the
edges of the result), yielding a~PPA $R =
(\pinit, \pfin, \{\ptransa\}_{a \in \Sigma})$.\footnote{$R$ is not necessarily a
PA since there might be transitions in $P$ that are either removed or copied
several times in the product construction.}
Intuitively, $R$~represents not only the words of $L(A)$ but also their
probability in $P$.
Now, let $\ptrans = \sum_{a \in \Sigma} \ptransa$ be the matrix that expresses,
for any $p,q \in Q[R]$, the significance of getting from $p$ to $q$ via any $a
\in \Sigma$. 
Further, it can be shown (cf.~the proof of Lemma~\ref{lem:polytime-lang-prob}
in the Appendix)
that the matrix $\ptrans^*$, representing the significance of going from $p$ to
$q$ via any $w \in \Sigma^*$, can be computed as $(\bI - \ptrans)^{-1}$.
Then, to get $\distrP(L(A))$, it suffices to take $\pinit^\top\cdot
\ptrans^*\cdot\pfin$.
Note that, due to the determinisation/disambiguation step, the obtained value
indeed is $\distrP(L(A))$ \mbox{despite~$R$ being a~PPA.}

\vspace{-0.0mm}
\subsection{Automata Reduction using Probabilistic Distance}\label{sec:label}
\vspace{-0.0mm}

We now exploit the above introduced probabilistic distance to formulate the task
of approximate reduction of NFAs as the following two optimisation problems.
Given an~NFA~$A$ and a~PA~$P$ specifying the distribution $\distrP: \Sigma^* \to
\range 0 1$, we define
\begin{itemize}

  \item  \textbf{size-driven reduction}: for $n \in \nat$, find an~NFA~$A'$ such
  that~$|A'| \leq n$ and the distance $\distanP(A, A')$ is minimal,
  
  \item  \textbf{error-driven reduction}: for $\epsilon \in \range 0 1$, find
  an~NFA~$A'$ such that $\distanP(A, A') \leq \epsilon$ and the size~$|A'|$
  is minimal.

\end{itemize}
The following lemma shows that the natural decision problem underlying both of
the above optimization problems is \pspace{}-complete, which matches the
complexity of computing the probabilistic distance as well as that of the
\emph{exact} reduction of NFAs~\cite{Jiang1993}.

\begin{restatable}{lemma}{generalPspace}\label{lem:general-pspace}
  Consider an NFA~$A$, a~PA~$P$, a~bound
on the number of states $n \in \nat$, and an~error bound~$\epsilon \in \range 0
1$. 
It is \pspace{}-complete to determine whether there exists an~NFA~$A'$ with~$n$
states s.t.~$\distanP(A, A') \leq \epsilon$.
\end{restatable}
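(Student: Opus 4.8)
The plan is to prove \pspace{}-completeness by establishing membership and hardness separately.

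For membership in \pspace{}, I would argue that one can guess a candidate NFA~$A'$ with at most~$n$ states and then verify that $\distanP(A, A') \leq \epsilon$. The verification amounts to computing $\distanP(A, A')$, which by the decomposition derived just before Lemma~\ref{lem:polytime-lang-prob} reduces to computing $\distrP(\langof{A_1})$ for three NFAs ($A$, $A'$, and their product~$A \cap A'$). By Lemma~\ref{lem:polytime-lang-prob}, each such probability is computable in \pspace{}. Since~$A'$ has at most~$n$ states, its description has polynomial size, so the nondeterministic guess uses polynomial space; combined with the \pspace{} verification and Savitch's theorem ($\npspace = \pspace$), the whole decision procedure runs in \pspace{}.

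For hardness, the natural route is a reduction from the \pspace{}-complete problem of \emph{language-preserving} NFA minimization~\cite{Jiang1993}, i.e., deciding whether a given NFA~$A$ has an equivalent NFA with at most~$n$ states. The idea is to instantiate the error bound as~$\epsilon = 0$: an NFA~$A'$ satisfies $\distanP(A, A') = 0$ precisely when $\distrP(\langof A \symdiff \langof{A'}) = 0$, meaning the symmetric difference is a null set under~$\distrP$. The key step is to choose the PA~$P$ so that it assigns positive probability to every word in~$\Sigma^*$ (for instance, a PA whose support is the full-language automaton with every transition carrying nonzero weight); then $\distrP(\langof A \symdiff \langof{A'}) = 0$ forces $\langof A \symdiff \langof{A'} = \emptyset$, i.e., $\langof{A'} = \langof A$. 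Under such a~$P$, the question ``is there an~$A'$ with~$n$ states and $\distanP(A, A') \leq 0$?'' is exactly the question ``is there an equivalent NFA with~$n$ states?'', giving the desired reduction.

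The main obstacle is constructing such a full-support PA that is genuinely a probabilistic automaton (satisfying the normalization Conditions~(i) and~(ii)) while keeping the reduction polynomial and preserving positivity on all of~$\Sigma^*$. A single-state PA with a self-loop of weight~$\lambda$ on each symbol and final weight $1 - |\Sigma|\lambda$, for a suitable $\lambda \in (0, 1/|\Sigma|)$, assigns every word a strictly positive probability and is easily checked to meet both conditions; its trimmed support is the universal automaton, so it does not interfere with the language comparison. With this~$P$ fixed independently of the instance, the reduction maps an instance $(A, n)$ of NFA minimization to the instance $(A, P, n, 0)$, and correctness follows from the positivity argument above. Care is only needed to confirm that this~$P$ satisfies $\suppof P = \trimof{\suppof P}$ as assumed in the preliminaries, which holds since its single state is both initial and accepting.
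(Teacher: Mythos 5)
Your membership argument is essentially the paper's: guess an $n$-state $A'$, verify $\distanP(A,A')\leq\epsilon$ via Lemma~\ref{lem:polytime-lang-prob} and the decomposition of $\distanP$ into three probabilities, and invoke $\npspace=\pspace$. The hardness direction, however, takes a genuinely different route. The paper reduces from NFA \emph{universality}: it fixes $n=1$ and $\epsilon=0$, uses the full-support PA $\pexp$, and argues (after a polynomial-time pre-check that $A$ accepts all words of length $0$ and $1$) that the only $1$-state NFA achieving zero distance is the universal one, so a $1$-state $A'$ with $\distanPexp(A,A')\leq 0$ exists iff $\langof A=\Sigma^*$. You instead reduce from the NFA \emph{minimization} decision problem of~\cite{Jiang1993}, keeping $n$ as given and using the same kind of full-support PA so that $\epsilon=0$ forces exact language equality; this makes the target problem literally contain minimization as the special case $\epsilon=0$. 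Your reduction is arguably more direct and avoids the paper's small case analysis of $1$-state automata, but it leans on the (deeper, though standard and already cited) Jiang--Ravikumar hardness result, whereas the paper's reduction needs only the textbook \pspace{}-hardness of universality. Two small points you should make explicit: the lemma asks for an $A'$ with \emph{exactly} $n$ states, so you need the (trivial) padding of a smaller equivalent NFA with isolated useless states --- though note that the paper's standing assumption that automata are trimmed makes it cleaner to just observe that ``at most $n$'' and ``exactly $n$'' versions reduce to one another; and your single-state PA with $\lambda=\frac{1}{|\Sigma|+1}$ is precisely the paper's $\pexp$, so the full-support construction itself is shared between the two proofs.
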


The notions defined above do not distinguish between introducing a \emph{false
positive} ($A'$~accepts a~word $w \notin \langof A$) or a \emph{false negative}
($A'$~does not accept a~word $w \in \langof A$) answers.
To this end, we define \emph{over-approximating} and \emph{under-approximating}
reductions as reductions for which the additional conditions $\langof A
\subseteq \langof{A'}$ and $\langof A \supseteq \langof{A'}$ hold, respectively.

%

 A~na\"{i}ve solution to the reductions would enumerate all NFAs $A'$ of
 sizes from 0 up to $k$ (resp.~$|A|$), for each of them compute $\distanP(A,
 A')$, and take an automaton with the smallest probabilistic distance
 (resp.~a smallest one satisfying the restriction on $\distanP(A, A')$).
 Obviously, this approach is computationally infeasible.



\vspace{-0.0mm}
\section{A Heuristic Approach to Approximate Reduction}\label{sec:reductions}
\vspace{-0.0mm}
In this section, we introduce two techniques for approximate reduction of NFAs
that avoid the need to iterate over all automata of a~certain size.
The first approach under-approximates the automata by removing states---we
call it the \emph{pruning reduction}---while
the second approach over-approximates the automata by adding self-loops to
states and removing redundant states---we call it the
\emph{self-loop reduction}.
Finding an optimal automaton using these reductions is also
\pspace{}-complete, but more amenable to heuristics like greedy algorithms.
We start with introducing two high-level greedy algorithms, one for the size-
and one for the error-driven reduction, and follow by showing their
instantiations for the pruning and the self-loop reduction.
%
A~crucial role in the algorithms is played by a function that labels states
of the automata by an estimate of the error that will be caused when some
of the reductions is applied at a~given~state.

%

\vspace{-3.0mm}
\subsection{A General Algorithm for Size-Driven Reduction}
\label{sec:greedy-size}
\vspace{-1.5mm}

\begin{wrapfigure}[10]{r}{6.8cm}
\SetAlCapHSkip{0mm}
\vspace{-8.5mm}
\hspace{-2mm}
\begin{minipage}{6.95cm}
\setlength{\algomargin}{0mm}
\SetAlCapHSkip{0mm}
\IncMargin{3mm}
\begin{algorithm}[H]
  \SetKwInOut{Input}{Input}\SetKwInOut{Output}{Output}
  \Input{\mbox{NFA $A = (Q, \trans, I, F)$, PA $P$, $n \geq 1$}}
  \Output{NFA $A'$, $\epsilon \in \real$ s.t. $|A|
          \leq n$ and $\distanP(A, {A'}) \leq \epsilon$\vspace*{-1mm}}
  $V \gets \emptyset$\;
  \For{$q \in Q $ in the order $\ordalglab$} {
    $V \gets V \cup \{q\}$;
    $A' \gets \reduceof{A}{V}$\;
    \lIf{$|A'| \leq n$} {
      \textbf{break}
    }
  }
  \mbox{\Return $A'$, $\epsilon = \errfuncof A V
  {\labfuncof A P}$\;}\\[0mm]
\caption{\mbox{A greedy size-driven reduction}}
\label{alg:general-greedy-size-reduction}
\end{algorithm}
\end{minipage}
\end{wrapfigure}

Algorithm~\ref{alg:general-greedy-size-reduction} shows a~general greedy method
for performing the size-driven reduction.
In order to use the same high-level algorithm in both directions of reduction
(over/under-approximating), it is parameterized with three functions: $\labfunc,
\reduce$, and $\errfunc$.
The
real intricacy of the procedure is hidden inside these three functions.
Intuitively, $\labfuncof A P$ assigns every state of an NFA~$A$ an
approximation of the error that will be caused wrt the PA~$P$ when
a reduction is applied at this state,
while the purpose 
of $\reduceof A V$ is to create a~new NFA~$A'$ obtained from~$A$ by
introducing some error at states from~$V$.%
\footnote{We emphasize that this does not mean that states from $V$ will be simply
removed from~$A$---the performed operation depends on the particular reduction.}
Further, $\errfuncof A V {\labfuncof A P}$
estimates the error introduced by the application of $\reduceof A V$,
possibly in a more precise (and costly) way than by just summing the concerned
error labels: Such a computation is possible outside of the main computation
loop.
We show instantiations of these functions later, when discussing the
reductions used.
Moreover, the algorithm is also parameterized with a~total order $\ordalglab$ that
defines which states of $A$ are processed first and which are processed later.
The ordering may take into account the precomputed labelling.
The algorithm accepts an NFA~$A$, a~PA~$P$, and $n \in \nat$ and outputs a~pair
consisting of an~NFA~$A'$ of the size $|A'| \leq n$ and an error bound $\epsilon$
such that $\distanP(A, A') \leq \epsilon$.

The main idea of the algorithm is that it creates a~set~$V$ of states where an
error is to be introduced.
$V$ is constructed by starting from an empty set and adding states to it in the
order given by~$\ordalglab$, until the size of the result of $\reduceof A V$ has
reached the desired bound~$n$
(in our setting, $\reduce$ is always antitone, i.e., for $V \subseteq
V'$, it holds that $|\reduceof A V| \geq |\reduceof A {V'}|$).
%
We now define the necessary condition for $\labfunc, \reduce$, and $\errfunc$
that makes Algorithm~\ref{alg:general-greedy-size-reduction} correct.
%

\medskip
\begin{condition}{\cone{}}
holds if for every NFA~$A$, PA~$P$, and a set~$V \subseteq \statesof A$, we have that
\begin{inparaenum}[(a)]
  \item  $\errfuncof A V {\labfuncof A P} \geq \distanP(A, \reduceof{A}{V})$,
  \item  $|\reduceof A {Q[A]}| \leq 1$, and
  \item  $\reduceof A {\emptyset} = A$.
\end{inparaenum}
\end{condition}
\medskip

\cone{}(a) ensures that the error computed by the reduction algorithm
indeed over-approximates the exact probabilistic distance,
\cone{}(b) ensures that the algorithm can (in the worst case, by applying the
reduction at every state of~$A$) for any $n \geq 1$ output a~result~$|A'|$ of the
size $|A'|\leq n$, and
\cone{}(c) ensures that when no error is to be introduced at any state, we
obtain the original automaton.

\begin{lemma}\label{lem:correctness-greedy}
Algorithm~\ref{alg:general-greedy-size-reduction} is correct if \cone{} holds.
\end{lemma}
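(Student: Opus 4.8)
The plan is to verify directly that the pair $(A', \epsilon)$ returned by Algorithm~\ref{alg:general-greedy-size-reduction} meets its output specification, namely $|A'| \leq n$ and $\distanP(A, A') \leq \epsilon$, using only the three clauses of Condition~\cone{}. Since the claim concerns a single run of a simple loop over the finite set $\statesof{A}$, no induction or fixpoint reasoning is needed; the argument splits cleanly into a size guarantee and an error guarantee, and essentially all of the content has been pushed into \cone{}.

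For the size guarantee I would first note that the \textbf{for}-loop iterates over the finite set $\statesof{A}$ and hence always terminates, and that on exit the returned automaton is $A' = \reduceof{A}{V}$ for the set $V$ accumulated so far. I then case-split on how the loop leaves. If it exits via \textbf{break}, then the guard $|A'| \leq n$ was just tested true, so the bound holds immediately. If the loop instead runs to completion without ever breaking, then on its final iteration $V$ has grown to all of $\statesof{A}$, so $A' = \reduceof{A}{\statesof{A}}$, and \cone{}(b) gives $|A'| \leq 1 \leq n$, where the last inequality uses the input assumption $n \geq 1$. Thus $|A'| \leq n$ in every case; \cone{}(c) covers only the degenerate situation $\statesof{A} = \emptyset$, in which the body is never entered and the returned object is $\reduceof{A}{\emptyset} = A$.

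For the error guarantee the key observation is that $\epsilon$ and $A'$ are computed from the \emph{same} final set $V$: the algorithm returns $A' = \reduceof{A}{V}$ together with $\epsilon = \errfuncof{A}{V}{\labfuncof{A}{P}}$. Instantiating \cone{}(a) at precisely this $V$ yields $\epsilon = \errfuncof{A}{V}{\labfuncof{A}{P}} \geq \distanP(A, \reduceof{A}{V}) = \distanP(A, A')$, which is exactly the required $\distanP(A, A') \leq \epsilon$.

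I do not expect a genuine obstacle here beyond careful bookkeeping, since each clause of \cone{} maps to one output obligation. The single point worth stressing is that correctness does \emph{not} rely on the antitonicity of $\reduce$ mentioned in the text: that property only governs the quality and monotone progress of the greedy choice, whereas termination with $|A'| \leq n$ is forced purely by \cone{}(b) applied at $V = \statesof{A}$, guaranteeing that the loop must break at the latest once every state has been added.
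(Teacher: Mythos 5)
Your proof is correct and is exactly the elaboration the paper intends when it states that the lemma ``follows straightforwardly from Condition \cone{}'': the size bound comes from the \textbf{break} guard together with \cone{}(b) at $V = \statesof A$ (using $n \geq 1$), and the error bound comes from instantiating \cone{}(a) at the final $V$, which is the same set used to compute both $A'$ and $\epsilon$. Your closing remark that antitonicity of $\reduce$ is not needed for this correctness claim is also accurate --- the paper mentions it only as a property of its instantiations, not as part of \cone{}.
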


\begin{proof}
Follows straightforwardly from Condition \cone{}.
\qed
\end{proof}




\pagebreak

\vspace{-0.0mm}
\subsection{A General Algorithm for Error-Driven Reduction}
\label{sec:greedy-error}
\vspace{-0mm}

\begin{wrapfigure}[10]{r}{7.1cm}
\begin{minipage}{7.2cm}
\setlength{\algomargin}{0mm}
\SetAlCapHSkip{0mm}
\vspace{-9mm}
\hspace{-2mm}
\IncMargin{3mm}
\begin{algorithm}[H]
  \SetKwInOut{Input}{Input}\SetKwInOut{Output}{Output}
  \Input{\mbox{NFA $A = (Q, \trans, I, F)$, PA $P$, $\epsilon \in \range 0 1$}}
  \Output{NFA $A'$ s.t. $d_P(A, A') \leq \epsilon$}
  $\stlab \gets \labfuncof A P$\;
  $V \gets \emptyset$\;
  \For{$q \in Q $ in the order $\ordalglab$} {
    $e \gets \errfuncof{A}{V\cup \{q\}}{\stlab}$\;
    \lIf{$e \leq \epsilon$} {
      $V \gets V \cup \{q\}$
    }
  }
  \Return $A' = \reduceof{A}{V}$\;
\caption{\mbox{A greedy error-driven reduction.}}
\label{alg:general-greedy-error-reduction}
\end{algorithm}
\end{minipage}
\end{wrapfigure}
In Algorithm~\ref{alg:general-greedy-error-reduction}, we provide a~high-level
method of computing the error-driven reduction.
The algorithm is in many ways similar to
Algorithm~\ref{alg:general-greedy-size-reduction};
it also computes a~set of\linebreak
states~$V$ where an error is to be introduced, but an
important difference is that we compute an approximation of the error in each
step and only add $q$ to $V$ if it does not raise the error over the
threshold~$\epsilon$.
Note that the $\errfunc$ does not need to be monotone, so it may be advantageous
to traverse all states from~$Q$ and not terminate as soon as the threshold is
reached.
The correctness of Algorithm~\ref{alg:general-greedy-error-reduction} also
depends on \cone{}.\vspace*{-1mm}

\begin{lemma}\label{lem:err-driven-correct}
Algorithm~\ref{alg:general-greedy-error-reduction} is correct if \cone{} holds.
\end{lemma}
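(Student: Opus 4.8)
The plan is to show that the NFA~$A'$ returned by Algorithm~\ref{alg:general-greedy-error-reduction} satisfies $\distanP(A, A') \leq \epsilon$, which is precisely the output guarantee in its specification. The whole argument rests on \cone{}(a), which states that the quantity computed inside the loop over-approximates the true distance, i.e.\ $\errfuncof{A}{V}{\stlab} \geq \distanP(A, \reduceof{A}{V})$ for every $V \subseteq \statesof{A}$ (recall $\stlab = \labfuncof{A}{P}$). Consequently, it suffices to prove that the final set~$V$ built by the loop satisfies $\errfuncof{A}{V}{\stlab} \leq \epsilon$, since then $\distanP(A, A') = \distanP(A, \reduceof{A}{V}) \leq \errfuncof{A}{V}{\stlab} \leq \epsilon$.

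First I would establish the loop invariant ``$V = \emptyset$ \emph{or} $\errfuncof{A}{V}{\stlab} \leq \epsilon$'', maintained after the processing of each state. The invariant holds on entry, as $V$ is initialised to~$\emptyset$. For the inductive step, when a state~$q$ is examined the algorithm computes $e = \errfuncof{A}{V \cup \{q\}}{\stlab}$ and adds $q$ to $V$ only when $e \leq \epsilon$; in that case the new value of~$V$ equals the old $V \cup \{q\}$ and its error is exactly the tested value $e \leq \epsilon$, whereas if the test fails $V$ is left unchanged and the invariant is inherited from the previous iteration. Thus the invariant survives every iteration and, in particular, holds for the $V$ returned at the end.

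It then remains to discharge the two cases of the invariant for the final~$V$. If $\errfuncof{A}{V}{\stlab} \leq \epsilon$, the bound on $\distanP(A, A')$ follows directly from \cone{}(a) as above. If instead $V = \emptyset$, I would invoke \cone{}(c) to get $A' = \reduceof{A}{\emptyset} = A$, so that $\distanP(A, A') = \distrP(\langof{A} \symdiff \langof{A}) = \distrP(\emptyset) = 0 \leq \epsilon$, using $\epsilon \geq 0$. Either way $\distanP(A, A') \leq \epsilon$, completing the proof.

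The step that needs the most care---and the one the lemma's phrasing is really guarding against---is the inductive step in the presence of a \emph{non-monotone} $\errfunc$. Because the error estimate may rise and fall as states are added, one cannot argue that the returned set inherits a bound established at some earlier point; the crucial observation is that the returned $V$ is exactly the set obtained right after the \emph{last} successful test, so its error was verified against~$\epsilon$ at that very moment, and no subsequently rejected state can invalidate it. This is precisely why the algorithm re-evaluates $\errfuncof{A}{V \cup \{q\}}{\stlab}$ on the full candidate set at each step rather than accumulating per-state labels, and why continuing to traverse all of $\statesof{A}$ (instead of stopping as soon as the threshold is first reached) remains sound.
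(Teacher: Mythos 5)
Your proof is correct and follows the same route as the paper, which simply states that the lemma ``follows straightforwardly from Condition \cone{}''; your loop invariant, the use of \cone{}(a) to bound $\distanP(A,\reduceof{A}{V})$ by the last verified value of $\errfunc$, and the separate treatment of $V=\emptyset$ via \cone{}(c) are exactly the details being elided there. Your remark on why non-monotonicity of $\errfunc$ is harmless---the returned $V$ is precisely the set whose error was checked at the last successful test---is a welcome clarification, not a deviation.
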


\begin{proof}
Follows straightforwardly from Condition \cone{}.
\qed
\end{proof}

\enlargethispage{6mm}

\vspace{-0.0mm}
\subsection{Pruning Reduction}\label{sec:pruning}
\vspace{-0mm}

The pruning reduction is based on identifying a~set of states to be removed
from an NFA~$A$, under-approximating the language of~$A$.
In particular, for~$A = (Q, \trans, I, F)$, the pruning reduction finds a~set
$R \subseteq Q$ and restricts~$A$ to~$Q\setminus R$, followed by removing
useless states, to construct a reduced automaton~$A' =
\trimof{\restr{A}{Q\setminus R}}$.
Note that the natural decision problem corresponding to this reduction is also
\pspace{}-complete.
\begin{restatable}{lemma}{pruningPspace}\label{lem:pruning-pspace}
  Consider an NFA~$A$, a~PA~$P$, a~bound on the number of states $n \in \nat$,
  and an~error bound~$\epsilon \in \range 0 1$.
  It is \pspace{}-complete to determine whether there exists a~subset of states
  $R \subseteq \statesof A$ of the size $|R| = n$ such that $\distanP(A,
  \restr A R) \leq \epsilon$.
\end{restatable}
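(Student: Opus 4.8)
The plan is to establish membership in \pspace{} and \pspace{}-hardness separately. For membership, observe that $\restr A R$ is always a sub-automaton of $A$, so $\langof{\restr A R} \subseteq \langof A$ and the distance collapses to $\distanP(A, \restr A R) = \distrP(\langof A \setminus \langof{\restr A R})$. A nondeterministic machine guesses a set $R \subseteq \statesof A$ with $|R| = n$ (polynomially many bits), constructs $\restr A R$, and evaluates $\distanP(A, \restr A R)$ via the decomposition into $\distrP(\langof A)$, $\distrP(\langof{\restr A R})$, and $\distrP(\langof A \cap \langof{\restr A R})$ from Section~\ref{sec:prob-distance}. Each of these is computable in \pspace{} by Lemma~\ref{lem:polytime-lang-prob} (the intersection being realised by the product construction, which enlarges the automaton only polynomially). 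Comparing the result with $\epsilon$ places the problem in \npspace{}, hence in \pspace{} by Savitch's theorem.

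For hardness I would reduce from the \pspace{}-complete universality problem for NFAs: given an NFA $M$ over $\Sigma$, decide whether $\langof M = \Sigma^*$. The reduction works over $\Sigma' = \Sigma \cup \{\#\}$ and fixes, once and for all, a probabilistic automaton $P$ of full support, i.e. $\distrP(w) > 0$ for every $w \in \Sigma'^*$ (one state with a self-loop of weight $1/(2|\Sigma'|)$ per symbol and final weight $1/2$ suffices). With such a $P$ and the error bound $\epsilon = 0$, the condition $\distanP(A, \restr A R) \le 0$ holds exactly when $\langof{\restr A R} = \langof A$, so the decision problem degenerates to language-preserving state removal.

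The reduction then builds $A = \tilde M \dunion U$ as a disjoint union of two gadgets. The gadget $\tilde M$ is a padded copy of $M$ whose language consists of all $\#$-separated sequences of blocks, each block lying in $\langof M$; it is constructible in polynomial time by redirecting fresh $\#$-transitions from the final states of $M$ back to its initial states, and it satisfies $\langof{\tilde M} = \Sigma'^*$ iff $\langof M = \Sigma^*$. The gadget $U$ is a cycle of $|\statesof{\tilde M}| + 1$ all-accepting states with uniform transitions, so $\langof U = \Sigma'^*$ and hence $\langof A = \Sigma'^*$ unconditionally; crucially, deleting even one state of $U$ breaks the cycle into a simple path accepting only a finite (length-bounded) language. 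Setting the budget to $n = |\statesof{\tilde M}|$ forces any admissible $R$ to discard at least one state of $U$. Keeping all of $\tilde M$ and nothing of $U$ reproduces $\langof{\tilde M}$, which equals $\Sigma'^*$ iff $M$ is universal, giving the ``yes'' direction.

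The delicate part, which I expect to be the main obstacle, is ruling out ``mixed'' solutions that keep some states of $\tilde M$ together with some states of $U$ and might patch up a non-universal $\langof{\tilde M}$. Two facts must be combined. First, any $R$ within budget retains a proper subset of $U$, whose language is finite and therefore omits all sufficiently long words, so $U$ can contribute only finitely many words. Second, and this is exactly why the padding into $\tilde M$ is needed rather than using $M$ directly, whenever $\langof M \neq \Sigma^*$ the language $\langof{\tilde M}$ omits arbitrarily long words (e.g. all strings $w_0 \# w_0 \# \cdots \# w_0$ for a fixed $w_0 \notin \langof M$), and the same holds for every sub-automaton $\restr{\tilde M}{R_M}$. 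Consequently, when $M$ is not universal, no budget-respecting union $\langof{\restr{\tilde M}{R_M}} \cup \langof{\restr U {R_U}}$ can cover these long words, so $\langof{\restr A R} \neq \Sigma'^*$ and the distance is strictly positive. This yields the equivalence, and since the construction is polynomial, \pspace{}-hardness follows; together with membership we obtain \pspace{}-completeness.
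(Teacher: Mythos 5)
Your proof is correct, and while the overall strategy coincides with the paper's (guess-and-check via Lemma~\ref{lem:polytime-lang-prob} plus $\npspace=\pspace$ for membership, and a reduction from NFA universality with $\epsilon=0$ for hardness), the hardness gadgets are genuinely different. The paper keeps $M$ essentially intact, merely prefixing it with an $x$-loop to get $x^*\langof M$, adds a disjoint chain of $n+1$ states accepting $x^n\Sigma^*$, and---crucially---uses a distribution supported \emph{only} on $x^n\Sigma^*$; the cardinality budget forces the chain to be cut, so zero distance forces $x^n\Sigma^*\subseteq x^*\langof M$ and hence universality. You instead keep the distribution generic (full support, essentially the paper's $\pexp$), so that zero distance degenerates to exact language equality, and push the work into the automaton: the $\#$-padding of $M$ into $\tilde M$ guarantees that non-universality is witnessed by \emph{arbitrarily long} missing words, which the finite remnant of the broken cycle $U$ cannot patch---this is exactly the ``mixed solutions'' obstacle you identify, and your padding handles it correctly. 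Each choice buys something: the tailored distribution lets the paper avoid modifying the iteration structure of $M$ (words outside $x^n\Sigma^*$ carry no mass, so the $x^*$ prefix suffices), whereas your construction works for any full-support model at the price of the padding argument. A further minor divergence: you read $R$ as the set of \emph{kept} states, which is the reading consistent with the paper's definition of $\restr{A}{S}$, while the paper's own proof counts $R$ as the \emph{removed} states; both readings go through with the budgets adjusted accordingly, so this does not affect correctness. Your reduction is polynomial and both directions of the equivalence are established, so the argument is complete.
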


Although Lemma~\ref{lem:pruning-pspace} shows that the pruning
reduction is as hard as a~general reduction
(cf.~Lemma~\ref{lem:general-pspace}), the pruning reduction is more amenable to the
use of heuristics like the greedy algorithms from
\S\ref{sec:greedy-size} and \S\ref{sec:greedy-error}.
We instantiate $\reduce$, $\errfunc$, and $\labfunc$ in these
high-level algorithms in the following way (the subscript $p$ means
\emph{pruning}):
%
\begin{align*}
  \reducePof A V = \trimof{\restr A {Q\setminus V}},
  &&
  \errfuncPof A V \stlab = \min_{V' \in \downclosPof V}
    \sum \left\{\stlabof q \mid q\in V'\right\},
\end{align*}
where $\downclosPof V$ is defined as follows.
Because of the use of $\trim$ in $\reduceP$, for a~pair of sets~$V, V'$ s.t.~$V
\subset V'$, it holds that $\reducePof A V$ may, in general, yield the same
automaton as $\reducePof A {V'}$.
Hence, we define a~partial order $\ordP$ on $2^Q$ as $V_1 \ordP V_2$ iff $\reducePof A
{V_1} = \reducePof A {V_2}$ and $V_1 \subseteq V_2$, and use $\downclosPof V$ to
denote the set of minimal elements wrt~$V$ and~$\ordP$.
The value of the approximation $\errfuncPof A V \stlab$ is therefore the minimum
of the sum of errors over all sets from $\downclosPof V$.

Note that the size of $\downclosPof V$ can again be exponential, and thus we
employ a greedy approach for guessing an optimal~$V'$. Clearly, this cannot 
affect the soundness of the algorithm, but only decreases the precision of the 
bound on the distance. Our experiments indicate that for automata appearing
in NIDSes, this simplification has typically only a negligible impact on the precision 
of the bounds.




For computing the state labelling, we provide the following three functions,
which differ in the precision they provide and the difficulty of their
computation (naturally, more precise labellings are harder to compute):
$\labfuncP^1, \labfuncP^2$, and $\labfuncP^3$.
Given an NFA~$A$ and a~PA~$P$, they generate the labellings $\stlabP^1,
\stlabP^2$, and $\stlabP^3$, respectively, defined~as
\vspace{-2.5mm}
\begin{align*}
  \stlabP^1(q) &= \sum \left\{\distrP(\blanginof{A}{q'})~\middle|~ q'\in
  \reachof{\{q\}}\cap F\right\},
\end{align*}
\vspace{-7.5mm}
\begin{align*}
  \stlabP^2(q) &= \distrP\left(\blanginof{A}{F \cap \reachof{q}}\right),&
 \stlabP^3(q) &= \distrP\left(\blanginof{A}{q}.\langinof{A}{q}\right).\\[-7.5mm]
\end{align*}
A~state label $\stlabof q$ approximates the error of the words removed
from~$\langof A$ when $q$ is removed.
More concretely, $\stlabP^1(q)$ is a~rough estimate saying that the error can be
bounded by the sum of probabilities of the banguages of all final states
reachable from~$q$ (in the worst case, all those final states might become
unreachable).
Note that $\stlabP^1(q)$
\begin{inparaenum}[(1)]
  \item  counts the error of a~word accepted in two different final states of
    $\reachof q$ twice, and
  \item  also considers words that are accepted in some final state in
    $\reachof q$ without going through~$q$.
\end{inparaenum}
The labelling~$\stlabP^2$ deals with (1) by computing the total probability of
the banguage of the set of all final states reachable from~$q$, and the
labelling~$\stlabP^3$ in addition also deals with (2) by only considering words
that traverse through~$q$ (they can still be accepted in some final state not in
$\reachof q$ though, so even $\stlabP^3$ is still imprecise).
Note that if $A$ is unambiguous then $\stlabP^1 = \stlabP^2$.


When computing the label of $q$, we first modify~$A$ to obtain~$A'$ accepting
the language related to the particular labelling. Then, we compute the value of 
$\distrP(\langof{A'})$ using the algorithm from~\S\ref{sec:prob-distance}.
Recall that this step 
is in general costly, due to the determinisation/disambiguation of~$A'$.
The key property of the labelling computation resides in the fact that if
$A$~is composed of several disjoint sub-automata, the automaton $A'$ is
typically much smaller than $A$ and thus the computation of the label is
considerable less demanding.
Since the automata appearing in regex matching for NIDS are  composed of the
union of ``tentacles'', the particular $A'$s are very small, which enables
efficient component-wise computation of the labels.



The following lemma states the correctness of using the pruning reduction as an
instantiation of Algorithms~\ref{alg:general-greedy-size-reduction}
and~\ref{alg:general-greedy-error-reduction} and also the relation among
$\stlabP^1$, $\stlabP^2$, and $\stlabP^3$.\vspace*{-1mm}

\begin{restatable}{lemma}{pruningCorrect}
\label{lem:pruning-correct}
For every $x\in \{1,2,3\}$, the functions $\reduceP$, $\errfuncP$, and
$\labfuncP^x$ satisfy~\cone{}.
Moreover, consider an NFA~$A$, a~PA~$P$, and let $\stlabP^x = \labfuncP^x(A, P)$
for $x \in \{1,2,3\}$.
Then, for each $q \in \statesof A$, we have $\stlabP^1(q) \geq \stlabP^2(q) \geq
\stlabP^3(q)$.
\end{restatable}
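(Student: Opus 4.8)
The plan is to prove two independent claims: first, that the pruning instantiation satisfies Condition \cone{} for every labelling $\labfuncP^x$, and second, that the labellings are ordered $\stlabP^1(q) \geq \stlabP^2(q) \geq \stlabP^3(q)$ pointwise. I would tackle the ordering inequalities first, since they are arguably the conceptual heart of the statement and the \cone{}(a) verification can reuse the upper bound established by $\stlabP^1$.

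For the chain of inequalities, I would compare the three banguage-based expressions directly. The inequality $\stlabP^2(q) \geq \stlabP^3(q)$ should follow from a language-inclusion argument: every word in $\blanginof{A}{q}.\langinof{A}{q}$ decomposes as a word reaching $q$ concatenated with a word from $q$ to some accepting state $q_F \in F \cap \reachof{q}$, so such a word lies in $\blanginof{A}{F \cap \reachof{q}}$; since $\distrP$ is monotone under language inclusion (being a sum of nonnegative probabilities over the words), the inequality follows. The inequality $\stlabP^1(q) \geq \stlabP^2(q)$ is subtler because $\stlabP^1$ is a \emph{sum} of $\distrP(\blanginof{A}{q'})$ over the individual final states $q'\in\reachof{\{q\}}\cap F$, whereas $\stlabP^2$ is the probability of the \emph{union} of those banguages. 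Here I would invoke the union bound: $\distrP\bigl(\bigcup_{q'} \blanginof{A}{q'}\bigr) \leq \sum_{q'} \distrP(\blanginof{A}{q'})$, with equality exactly when the banguages are pairwise disjoint. This also explains the remark that $\stlabP^1 = \stlabP^2$ when $A$ is unambiguous, since unambiguity forces each word to reach at most one final state, making the banguages disjoint.

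For Condition \cone{}, parts (b) and (c) are nearly immediate from the definition of $\reduceP$: taking $V = Q$ gives $\reducePof A Q = \trimof{\restr A {\emptyset}}$, which trims the empty automaton to something of size at most one, and $\reducePof A \emptyset = \trimof{\restr A Q} = \trimof A$, which accepts $\langof A$ (trimming does not change the language), so taking the NFA up to language-equivalence these hold. The real work is \cone{}(a): I must show $\errfuncPof A V \stlab \geq \distanP(A, \reducePof A V)$. Since removing states can only remove words from the language (pruning under-approximates), the symmetric difference $\langof A \symdiff \langof{\reducePof A V}$ equals $\langof A \setminus \langof{\reducePof A V}$, i.e. exactly the words whose every accepting run passes through some removed state. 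I would bound the probability of this set by the sum of $\stlabof q$ over a minimal set $V' \in \downclosPof V$ that achieves the same pruning, using the fact (from the first part) that $\stlabP^1$ is already a sound over-estimate of the error contributed by removing $q$, hence so are the smaller $\stlabP^2$ and $\stlabP^3$ provided they still dominate the true per-state error.

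The main obstacle I anticipate is precisely the soundness direction of \cone{}(a) for the tighter labellings $\stlabP^2$ and $\stlabP^3$: I need the per-state label to over-approximate the error even though $\stlabP^3$ deliberately discards words accepted without traversing $q$ and words reaching $q$ but accepted outside $\reachof q$. The delicate point is that when a \emph{set} $V$ of states is removed, a word is lost only if \emph{all} of its accepting runs are destroyed, so attributing the loss to the labels summed over $V'$ could in principle undercount if some word's runs are killed collectively but not by any single state's removed language. I would resolve this by arguing that any lost word must have \emph{every} accepting run blocked, hence must traverse $V'$ on every accepting run, so it is counted in $\blanginof{A}{q}.\langinof{A}{q}$ for at least one $q \in V'$; combined with the union bound over $V'$ and the minimality choice via $\downclosPof V$, this yields the desired over-approximation. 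Establishing this carefully---so that even the least generous labelling $\stlabP^3$ remains sound---is where the argument requires the most care.
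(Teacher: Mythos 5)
Your proposal is correct and follows essentially the same route as the paper's proof: the ordering $\stlabP^1(q) \geq \stlabP^2(q) \geq \stlabP^3(q)$ via the union bound and the inclusion $\blanginof{A}{q}.\langinof{A}{q} \subseteq \blanginof{A}{F \cap \reachof{q}}$, the trivial verification of \cone{}(b) and \cone{}(c), and for \cone{}(a) the containment $\langof A \setminus \langof{\reducePof{A}{V'}} \subseteq \bigcup_{q \in V'}\blanginof{A}{q}.\langinof{A}{q}$ for a minimal $V' \in \downclosPof V$ followed by subadditivity of $\distrP$. Just note that the logical direction must be (and, in your final paragraph, is) to establish soundness directly for the tightest labelling $\stlabP^3$ and let the larger labels inherit it from the ordering---not the other way around, as your middle paragraph momentarily suggests.
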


\enlargethispage{6mm}

\vspace{-5.0mm}
\subsection{Self-loop Reduction}\label{sec:self-loop}
\vspace{-2mm}

The main idea of the self-loop reduction is to over-approximate the language
of~$A$ by adding self-loops over every symbol at selected states.
This makes some states of~$A$ redundant, allowing them to be removed without
introducing any more error.
Given an~NFA~$A = (Q, \trans, I, F)$, the self-loop reduction searches
for a~set of states $R \subseteq Q$, which will have self-loops added, and
removes other transitions leading out of these states, making some states
unreachable.
The unreachable states are then removed.

Formally, let $\selfloopof A R$ be the NFA $(Q, \trans', I, F)$ whose transition
function $\trans'$ is defined, for all $p \in Q$ and $a \in \Sigma$, as $\trans'(p,a) =
\{p\}$ if $p \in R$ and $\trans'(p,a) = \trans(p,a)$ otherwise.
%
%
As with the pruning reduction, the natural decision problem corresponding to the self-loop reduction
is also \pspace{}-complete.\vspace*{-1mm}
\begin{restatable}{lemma}{selfloopPspace}
\label{lem:selfloop-pspace}
  Consider an NFA~$A$, a~PA~$P$, a~bound on the number of states $n \in \nat$,
  and an~error bound~$\epsilon \in \range 0 1$.
  It is \pspace{}-complete to determine whether there exists a~subset of states
  $R \subseteq \statesof A$ of the size $|R| = n$ such that $\distanP(A,
  \selfloopof A R) \leq \epsilon$.
\end{restatable}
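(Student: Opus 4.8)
The plan is to prove \pspace{}-completeness of the self-loop reduction decision problem in two parts, mirroring the structure one would expect for Lemma~\ref{lem:general-pspace} and Lemma~\ref{lem:pruning-pspace}. For membership in \pspace{}, I would observe that a candidate subset~$R$ with $|R| = n$ can be guessed (nondeterministically, appealing to Savitch's theorem since $\npspace = \pspace$), the automaton $\selfloopof A R$ can be constructed in polynomial space by the explicit transition rewriting given just before the statement, and then the distance $\distanP(A, \selfloopof A R)$ can be computed and compared to $\epsilon$ within \pspace{} using Lemma~\ref{lem:polytime-lang-prob}: recall that $\distanP(A_1,A_2)$ reduces to computing $\distrP(\langof{A_i})$ and $\distrP(\langof{A_1} \cap \langof{A_2})$, each of which is a \pspace{} computation of $\distrP(\langof{\cdot})$ for a suitable NFA built by product construction. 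Since all these steps compose within \pspace{}, the whole procedure is in \npspace{} $=$ \pspace{}.

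The harder direction is \pspace{}-hardness, which I would obtain by reduction from a known \pspace{}-complete problem, most naturally the NFA-universality problem (or equivalently the \pspace{}-hardness already invoked for exact NFA minimization~\cite{Jiang1993}). The key idea is to exploit the over-approximating nature of the self-loop reduction: adding self-loops only ever enlarges the language, so $\langof A \subseteq \langof{\selfloopof A R}$ always holds, and therefore $\distanP(A, \selfloopof A R) = \distrP(\langof{\selfloopof A R} \setminus \langof A)$. The goal is to engineer an instance where achieving a small distance with a prescribed number~$n$ of self-looped states forces a structural property equivalent to the source problem's yes-instance. I would pick the PA~$P$ to realise a distribution (e.g. a near-uniform one over words up to some length, or one supported on a carefully chosen witness set) so that the distance threshold~$\epsilon$ becomes a faithful proxy for a language-inclusion condition that the hard problem asks about.

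The main obstacle I anticipate is controlling the interaction between the constraint $|R| = n$ and the probabilistic distance: unlike a purely language-theoretic reduction, here I must simultaneously force the right \emph{number} of states to be self-looped \emph{and} make the resulting over-approximation error encode the original decision. A clean way to handle this is to design the gadget so that self-looping the intended set~$R$ introduces zero (or below-threshold) error exactly when the source instance is positive, while any other admissible choice of~$n$ states, or the positive instance's failure, pushes the error strictly above~$\epsilon$; setting $\epsilon$ to an appropriate value (possibly $0$, which sidesteps delicate probability estimates) keeps the arithmetic manageable. Given the strong parallel with Lemma~\ref{lem:pruning-pspace}, I expect the reduction used there to adapt almost verbatim once the direction of approximation (over- instead of under-) is accounted for, so most of the effort lies in verifying that the self-loop construction preserves the intended correspondence rather than in inventing a fundamentally new gadget.
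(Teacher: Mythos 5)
Your membership argument is exactly the paper's: nondeterministically guess $R$, build $\selfloopof A R$, decide $\distanP(A, \selfloopof A R) \leq \epsilon$ via Lemma~\ref{lem:polytime-lang-prob}, and conclude by $\npspace = \pspace$. The gap is in the hardness direction. You correctly identify the source problem (NFA universality), the useful threshold ($\epsilon = 0$), and the relevant structural fact ($\langof A \subseteq \langof{\selfloopof A R}$), but you never exhibit an actual reduction: you only list the properties a gadget would need and defer to an ``almost verbatim'' adaptation of the construction from Lemma~\ref{lem:pruning-pspace}. That deferral is precisely where the proof is missing, and it also points in an unpromising direction: the pruning gadget (the auxiliary automaton for $x^n.\Sigma^*$ together with a distribution supported on $x^n.\Sigma^*$) is engineered around the fact that deleting states can only \emph{lose} words, so the budget $n$ is used to force the destruction of a high-probability sublanguage. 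Transplanting it to the self-loop setting would require a fresh analysis of which words the added self-loops \emph{gain}, which is neither verbatim nor verified in your proposal.

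The reduction the paper actually uses is far simpler and needs no gadget at all: take the input NFA $A$ itself, set $n = |\statesof A|$ and $\epsilon = 0$, and take $P = \pexp$, the PA assigning every word of $\Sigma^*$ a positive probability. The only admissible choice is $R = \statesof A$, and $\selfloopof A {\statesof A}$ has only self-loop transitions, so its language is $\Sigma^*$ if some state is both initial and final, and $\emptyset$ otherwise. After discarding in polynomial time the degenerate case $\langof A = \emptyset$, one gets that $\distanPexp(A, \selfloopof A {\statesof A}) \leq 0$ forces $\langof A = \langof{\selfloopof A {\statesof A}} = \Sigma^*$ (a zero distance wrt $\pexp$ is language equality), and conversely universality of $A$ yields distance $0$. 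To complete your write-up you need either this observation or a fully worked-out gadget; as it stands, the hardness half is a plan rather than a proof.
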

%

\vspace*{-1mm}The required functions in the error- and size-driven reduction algorithms are
instantiated in the following way (the subcript $\mathit{sl}$ means
\emph{self-loop}):\vspace*{-2mm}
\begin{align*}
  \reduceSLof A V = \trimof{\selfloopof A V},
  &&
  \errfuncSLof A V \stlab = 
    \sum \left\{\stlabof q \mid q\in \min\left(\downclosSLof
    V\right)\right\},\\[-8mm]
\end{align*}
where $\downclosSLof V$ is defined in a~similar manner as~$\downclosPof V$ in
the previous section (using a~partial order $\ordSL$ defined similarly to
$\ordP$; in this case, the order $\ordSL$ has a~single minimal element, though).

The functions $\labfuncSL^1, \labfuncSL^2$, and $\labfuncSL^3$ compute the
state labellings $\stlabSL^1, \stlabSL^2$, and~$\stlabSL^3$ for an NFA~$A$ and
a~PA~$P$ defined as follows:
\vspace{-2mm}
\begin{align*}
  \stlabSL^1(q) &= \weight{P}{\blanginof{A}{q}}, &
  \stlabSL^2(q) &= \distrP\left(\blanginof{A}{q}.\Sigma^*\right),
\end{align*}
\vspace{-7mm}
\begin{align*}
  \stlabSL^3(q) &= \stlabSL^2(q) -
  \distrP\left(\blanginof{A}{q}.\langinof{A}{q}\right).\\[-6mm]
\end{align*}
Above, $\weight P w$ for a~PA~$P = (\pinit, \pfin, \{\ptransa\}_{a \in \Sigma})$
and a~word $w \in \Sigma^*$ is defined as
$\weight P w = \pinit^\top \cdot \ptrans_w \cdot \onevec$ (i.e., similarly
as~$\distrP (w)$ but with the final weights~$\pfin$
discarded), and $\weight P L$ for
$L \subseteq \Sigma^*$ is defined as $\weight P L = \sum_{w \in L} \weight P
w$.

Intuitively, the state labelling $\stlabSL^1(q)$ computes the probability
that~$q$ is reached from an initial state, so if $q$ is pumped up with all
possible word endings, this is the maximum possible error introduced by the
added word endings.
This has the following sources of imprecision:
\begin{inparaenum}[(1)]
  \item  the probability of some words may be included twice, e.g., when
    $\blanginof A q = \{a, ab\}$, the probabilities of all words from
    $\{ab\}.\Sigma^*$ are included twice in $\stlabSL^1(q)$ because
    $\{ab\}.\Sigma^* \subseteq \{a\}.\Sigma^*$, and
  \item  $\stlabSL^1(q)$ can also contain probabilities of words that are
    already accepted on a~run traversing~$q$.
\end{inparaenum}
The state labelling $\stlabSL^2$ deals with (1) by considering the probability
of the language~$\blanginof A q . \Sigma^*$, and $\stlabSL^3$ deals also with (2)
by subtracting from the result of $\stlabSL^2$ the probabilities of the words
that pass through~$q$ and are accepted.

The computation of the state labellings for the self-loop reduction is done in
a~similar way as the computation of the state labellings for the pruning
reduction (cf.~\S\ref{sec:pruning}). For a computation of $\weight P L$ one 
can use the same algorithm as for $\distrP(L)$, only the final vector for 
PA $P$ is set to $\onevec$.
%
%
The correctness of Algorithms~\ref{alg:general-greedy-size-reduction}
and~\ref{alg:general-greedy-error-reduction} when instantiated using the
self-loop reduction is stated in the following lemma.\vspace*{-1mm}
\begin{restatable}{lemma}{selfLoopCorrect}
\label{lem:self-loop-correct}
For every $x\in \{1,2,3\}$, the functions $\reduceSL$, $\errfuncSL$, and
$\labfuncSL^x$ satisfy \cone{}.
Moreover, consider an NFA~$A$, a~PA~$P$, and let $\stlabSL^x = \labfuncSL^x(A,
P)$ for $x \in \{1,2,3\}$.
Then, for each $q \in \statesof A$, we have $\stlabSL^1(q) \geq \stlabSL^2(q) \geq
  \stlabSL^3(q)$.
\end{restatable}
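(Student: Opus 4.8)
The plan is to reduce the lemma to two facts: the chain of inequalities $\stlabSL^1(q) \ge \stlabSL^2(q) \ge \stlabSL^3(q)$, and Condition~\cone{}(a) for the tightest labelling $\stlabSL^3$. Conditions~\cone{}(b) and \cone{}(c) are immediate: self-looping every state turns $A$ into a disjoint family of isolated self-loops, so after $\trim$ only an initial state that is also accepting can survive, whence $|\reduceSLof A {\statesof A}| \le 1$ (recall the single initial state of the automata we consider); and $\selfloopof A \emptyset = A$ gives $\reduceSLof A \emptyset = \trimof A = A$, assuming w.l.o.g.\ that $A$ is trim. Given the ordering, \cone{}(a) for $\stlabSL^3$ yields it for $\stlabSL^2$ and $\stlabSL^1$ as well, since $\errfuncSL$ is monotone in the labels and larger labels only loosen the bound.

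For the ordering, I would first record the identity $\distrP(u.\Sigma^*) = \weight P u$ for any $u \in \Sigma^*$. Indeed $\distrP(u.\Sigma^*) = \sum_{v} \pinit^\top \ptrans_{u} \ptrans_{v} \pfin = \pinit^\top \ptrans_u \big(\sum_v \ptrans_v\big)\pfin = \pinit^\top\ptrans_u \ptrans^*\pfin$, and the PA normalization $\ptrans\onevec + \pfin = \onevec$ shows that $\onevec$ solves $\mathbf{x} = \pfin + \ptrans \mathbf{x}$, so $\ptrans^*\pfin = (\bI - \ptrans)^{-1}\pfin = \onevec$; hence $\distrP(u.\Sigma^*) = \pinit^\top \ptrans_u \onevec = \weight P u$. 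Summing over $u \in \blanginof A q$ and using subadditivity of $\distrP$ on the overlapping union $\blanginof A q.\Sigma^* = \bigcup_{u \in \blanginof A q} u.\Sigma^*$ gives $\stlabSL^2(q) = \distrP(\blanginof A q.\Sigma^*) \le \sum_{u} \distrP(u.\Sigma^*) = \sum_u \weight P u = \stlabSL^1(q)$. The inequality $\stlabSL^2(q) \ge \stlabSL^3(q)$ is trivial, as $\stlabSL^3$ subtracts the nonnegative quantity $\distrP(\blanginof A q.\langinof A q)$.

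The core is \cone{}(a) for $\stlabSL^3$. Since the self-loop reduction over-approximates ($\langof A \subseteq \langof{\selfloopof A V}$, as a self-looped state accepts every continuation and $\trim$ preserves the language), the symmetric difference collapses to the newly accepted words, $\distanP(A, \reduceSLof A V) = \distrP(\langof{\selfloopof A V} \setminus \langof A)$. The key combinatorial step is run surgery: any accepting run of a newly accepted word $w$ in $\selfloopof A V$ must meet $V$ (otherwise it is already a run of $A$); taking the first state $q \in V$ it visits, the prefix consumed so far uses only unmodified transitions and hence lies in $\blanginof A q$, placing $w$ in $\blanginof A q.\Sigma^*$. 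Moreover, words of the form $\blanginof A q.\langinof A q$ traverse $q$ and are accepted already in $A$, so by over-approximation they are not new and may be excluded. Thus the new words routed first through $q$ lie in $\blanginof A q.\Sigma^* \setminus \blanginof A q.\langinof A q$, of $\distrP$-measure at most $\stlabSL^2(q) - \distrP(\blanginof A q.\langinof A q) = \stlabSL^3(q)$. A union bound over $q$, applied to the minimal set $V' = \min(\downclosSLof V)$ (which yields the same reduced automaton, hence the same distance), gives $\distanP(A, \reduceSLof A V) \le \sum_{q \in V'} \stlabSL^3(q) = \errfuncSLof A V {\stlabSL^3}$, i.e.\ \cone{}(a).

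The main obstacle I anticipate is the run-surgery argument together with the bookkeeping of overlaps: one must argue that routing each new word through the first $V$-state it meets is well defined and that the resulting cover by the sets $\blanginof A q.\Sigma^*$ is accounted for by a single subadditivity step, while simultaneously justifying that subtracting $\blanginof A q.\langinof A q$ never removes a genuinely new word. Passing from $V$ to the minimal $V'$ in $\downclosSLof V$ must also be handled carefully so that the tighter $\errfunc$ still dominates the unchanged distance.
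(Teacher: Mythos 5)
Your proposal is correct and follows essentially the same route as the paper's proof: it reduces \cone{}(a) to the tightest labelling $\stlabSL^3$ via the chain $\stlabSL^1(q) \ge \stlabSL^2(q) \ge \stlabSL^3(q)$, passes to the minimal set $V' = \min(\downclosSLof V)$, bounds the newly accepted words by $\bigcup_{q\in V'}\blanginof A q.\Sigma^* \setminus \blanginof A q.\langinof A q$ (your ``run surgery'' is precisely the paper's justification of this inclusion), and finishes with a union bound. The only differences are presentational: you derive $\weight P u = \distrP(u.\Sigma^*)$ explicitly from $(\bI-\ptrans)^{-1}\pfin=\onevec$ where the paper merely asserts the equality, and your appeal to a single initial state in \cone{}(b) is an assumption the paper's NFA definition does not make---though the paper's own claim $|\trimof{\selfloopof A {\statesof A}}| = 1$ is equally loose when $|I\cap F|\neq 1$, so this is not a gap relative to the paper.
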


\vspace{-3.0mm}
\section{Reduction of NFAs in Network Intrusion Detection Systems}\label{sec:case_study}
\vspace{-2.0mm}

We have implemented our approach in a~Python prototype named \appreal{}
(APProximate REduction of Automata and Languages)%
\footnote{\url{https://github.com/vhavlena/appreal/tree/tacas18}}
and evaluated it on the use case of network intrusion detection using
\snort{}~\cite{snort}, a~popular open source NIDS.
The version of \appreal{} used for the evaluation in the current paper is
available as an artifact~\cite{artifact} for the TACAS'18 artifact virtual
machine~\cite{tacas18-vm}.

\enlargethispage{5mm}

\subsection{Network Traffic Model}\label{sec:label}
\vspace{-1.5mm}

The reduction we describe in this paper is driven by a~probabilistic model
representing a~distribution over~$\Sigma^*$, and the formal guarantees
are also wrt this model.
We use \emph{learning} to obtain a model of network traffic over the 8-bit ASCII
alphabet at a~given network point.
Our model is created from several gigabytes of network traffic
from a~measuring point of the CESNET Internet provider
connected to a 100\,Gbps backbone link (unfortunately, we cannot provide the
traffic dump since it may contain sensitive~data).

Learning a~PA representing the network traffic faithfully is hard.
The PA cannot be too specific---although the number of different packets that
can occur is finite, it is still extremely large (a~conservative
estimate assuming the most common scenario Ethernet/IPv4/TCP would still yield
a~number over $2^{10,000}$).
If we assigned non-zero probabilities only to the packets from the dump (which
are less than $2^{20}$), the obtained model would completely ignore virtually
all packets that might appear on the network, and, moreover, the model would
also be very large (millions of states), making it difficult to use in our
algorithms.
A~generalization of the obtained traffic is therefore needed.

A~natural solution is to exploit results from the area of PA learning, such
as~\cite{Carrasco1994,Thollard2004}.
Indeed, we experimented with the use of \alergia{}~\cite{Carrasco1994},
a~learning algorithm that constructs a~PA from a~prefix tree (where edges
are labelled with multiplicities) by merging nodes that are ``similar.''
The automata that we obtained were, however, \emph{too}
general.
In particular, the constructed automata destroyed the structure of network
protocols---the merging was too permissive and the generalization merged distant
states, which introduced loops over a very large substructure in the automaton
(such a~case usually does not correspond to the design of network protocols).
As a~result, the obtained PA more or less represented the Poisson
distribution, having essentially no value for us.

In \S\ref{sec:evaluation}, we focus on the detection of malicious traffic
transmitted over HTTP.
We take advantage of this fact and create a~PA representing the traffic while
taking into account the structure of HTTP.
We start by manually creating a~DFA that represents the high-level structure of
HTTP.
Then, we proceed by feeding 34,191 HTTP packets from our sample into the DFA,
at the same time taking notes about how many times every state is reached and
how many times every transition is taken.
The resulting PA $\phttp$ (of 52 states) is then obtained from the DFA and the
labels in the obvious way.

The described method yields automata that are much
better than those obtained using \alergia{} in our experiments.
A~disadvantage of the method is that it is only semi-automatic---the basic DFA
needed to be provided by an expert.
We have yet to find an~algorithm that would suit our needs for learning more
general network traffic.


\vspace{-0.0mm}
\subsection{Evaluation}\label{sec:evaluation}
\vspace{-0.0mm}


We start this section by introducing  the experimental setting, namely, the
integration of our reduction techniques into the tool chain implementing
efficient regex matching, the concrete settings of \appreal{}, and the
evaluation environment.
Afterwards, we discuss the results evaluating the quality of the obtained approximate reductions as well as of the provided error bounds. Finally, we present the performance of our approach and discuss its key aspects.
Due to the lack of space, we selected  the most interesting results demonstrating the potential as well as the limitations of our approach.

\subparagraph{General setting.}
\snort{} detects malicious network traffic based on \emph{rules} that contain
\emph{conditions}.
The conditions may take into consideration, among others, network addresses, ports, or
Perl compatible regular expressions (PCREs) that the packet payload should
match.
In our evaluation, we always select a~subset of \snort{} rules, extract the
PCREs from them, and use~\netbench{}~\cite{Pus2011} to transform them into a~single
NFA~$A$.
Before applying $\appreal{}$, we use the state-of-the-art NFA reduction tool
\reducetool{}~\cite{Reduce} 
to decrease
the size of~$A$.
\reducetool{} performs a~language-preserving reduction of~$A$ using
advanced variants of simulation~\cite{mayr_advanced_2013}
(in the experiment reported in Table~\ref{tab:big}, we skip the use of \reducetool{} at this step as discussed 
in the performance evaluation).
The automaton $\ared$ obtained as the result of \reducetool{} is the input of
\appreal{}, which performs one of the approximate reductions from
\S\ref{sec:reductions} wrt the traffic model~$\phttp$, yielding~$\aapp$.
After the approximate reduction, we, one more time, use \reducetool{} and obtain
the result~$A'$.

\subparagraph{Settings of~\appreal{}.}
In the use case of NIDS pre-filtering, it may be important to never introduce
a~false negative, i.e., to never drop a~malicious packet.
Therefore, we focus our evaluation on the \emph{self-loop
reduction}~(\S\ref{sec:self-loop}).
In particular, we use the state labelling function $\labfuncSL^2$, since it provides 
a good trade-off between the precision and the computational demands
(recall that the computation of $\labfuncSL^2$
can exploit the ``tentacle'' structure of the NFAs we work with).
We give more attention to the \emph{size-driven reduction}~(\S\ref{sec:greedy-size})
since, in our setting, a~bound on the available FPGA resources is typically given  and 
the task is to create an~NFA with the smallest error that fits inside.
The order $\ordalglabgrc$ over states  used in \S\ref{sec:greedy-size} and \S\ref{sec:greedy-error}
is defined as  $s \ordalglabgrc s' \Leftrightarrow  \stlabSL^2(s) \leq \stlabSL^2(s')$.

\subparagraph{Evaluation environment.} All experiments run on a 64-bit \linux{} \debian{} 
workstation with the Intel Core(TM) i5-661 CPU running at 3.33\,GHz with
16\,GiB of RAM.

%

%
%
%
%
%

\subparagraph{Description of tables.}
In the caption of every table, we provide the name of the input file (in the
directory \texttt{regexps/tacas18/} of the repository of \appreal{}) with the
selection of \snort{} regexes used in the particular experiment, together with
the type of the reduction (size- or error-driven).
All reductions are over-approximating (self-loop reduction).
We further provide the size of the input automaton~$|A|$, the size after the
initial processing by \reducetool{} ($|\ared|$), and the time of this reduction
($\timeof{\reducetool}$).
Finally, we list the times of computing the state
labelling~$\labfuncSL^2$ on~$\ared$ ($\timeof{\labfuncSL^2}$), the exact
probabilistic distance  ($\timeof{\text{Exact}}$), and also the number of
\emph{look-up tables} ($\lutsof \ared$) consumed on the targeted FPGA (Xilinx
Virtex 7 H580T) when $\ared$ was synthesized (more on this in \S\ref{sec:res}).
The meaning of the columns in the tables is the following:
\begin{itemize}
  \item[$\boldsymbol{k}/\boldsymbol{\epsilon}$]
    is the parameter of the reduction.
    In particular, $\boldsymbol{k}$ is used for the size-driven reduction and
    denotes the desired reduction ration $k = \frac n {|\ared|}$ for an input
    NFA~$\ared$ 
     and the desired size of the
    output~$n$.
    On the other hand, $\boldsymbol{\epsilon}$ is the desired maximum error on
    the output for the error-driven reduction.

  \item[$\boldsymbol{|\aapp|}$]
    shows the number of states of the automaton~$\aapp$ after the
    reduction by \appreal{} and the time the reduction took (we omit it when it is not interesting).

  \item[$\boldsymbol{|A'|}$]
    contains the number of states of the NFA~$A'$ obtained after
    applying \reducetool{} on $\aapp$ and the time used by \reducetool{} at
    this step (omitted when not interesting).

  \item[\textbf{Error bound}]
    shows the estimation of the error of~$A'$ as
    determined by the reduction itself, i.e., it is the probabilistic distance  computed by the
    function $\errfunc$ in~\S\ref{sec:reductions}.

  \item[\textbf{Exact error}]
    contains the values of $\distan_{\phttp}(A, A')$ that we computed
    \emph{after} the reduction in order to evaluate the precision of the result
    given in \textbf{Error bound}.
    The computation of this value is very expensive ($time(Exact)$) since it inherently requires 
    determinisation of the whole automaton $A$.
    We do not provide it in Table~\ref{tab:big} 
    (presenting the results for the automaton $\abac$ with 1,352 states) because the
    determinisation ran out of memory
    (the step is not required in the reduction process).

  \item[\textbf{Traffic error}]
    shows the error that we obtained when compared~$A'$ with~$A$ on an HTTP
    traffic sample, in particular the ratio of packets misclassified by~$A'$ to
    the total number of packets in the sample (242,468).
    Comparing \textbf{Exact error} with \textbf{Traffic error}
    gives us a~feedback about the fidelity of the traffic model~$\phttp$. 
    We note that there are no guarantees on the relationship between
    \textbf{Exact error} and \textbf{Traffic~error}.

  \item[\textbf{LUTs}]
    is the number of LUTs consumed by $A'$ when synthesized into
    the FPGA.
    Hardware synthesis is a~costly step so we provide this value only for
    selected NFAs.
\end{itemize}

\begin{table}[t]
  \caption{Results for the
    \texttt{http-malicious} regex,
    $|\amal| = 249$,
    $|\amalred| = 98$,\\
    $\timeof{\reducetool{}} = 3.5$\,s,
    $\timeof{\labfuncSL^2} = 38.7$\,s,
    $\timeof{\text{Exact}} = {}$ 3.8--6.5\,s, and\\
    $\lutsof{\amalred} = 382$.
  }
\begin{subtable}[t]{0.49\textwidth}
\caption{size-driven reduction}
\begin{adjustbox}{center}
\setlength{\tabcolsep}{4pt}
\scalebox{0.76}{
\begin{tabular}{l|rrlllr}
\toprule
                             &                                  &                   & \mc1c{\textbf{Error}} & \mc1c{\textbf{Exact}} & \mc1c{\textbf{Traffic}}  \\[-0.8mm]
  \mc1{c|}{$\boldsymbol{k}$}   & \mc1c{$\boldsymbol{|\amalapp|}$} & \mc1c{$\boldsymbol{|\amalp|}$} & \mc1c{\textbf{bound}} & \mc1c{\textbf{error}} & \mc1c{\textbf{error}} & \mc1c{\textbf{LUTs}}   \\
 \midrule
 0.1 & 9 (0.65\,s)  & 9 (0.4\,s)  & 0.0704  & 0.0704  & 0.0685  & \mc1c{---} \\
 0.2 & 19 (0.66\,s) & 19 (0.5\,s) & 0.0677  & 0.0677  & 0.0648  & \mc1c{---} \\
 0.3 & 29 (0.69\,s) & 26 (0.9\,s) & 0.0279  & 0.0278  & 0.0598  & 154        \\
 0.4 & 39 (0.68\,s) & 36 (1.1\,s) & 0.0032  & 0.0032  & 0.0008  & \mc1c{---} \\
 0.5 & 49 (0.68\,s) & 44 (1.4\,s) & 2.8e-05 & 2.8e-05 & 4.1e-06 & \mc1c{---} \\
 0.6 & 58 (0.69\,s) & 49 (1.7\,s) & 8.7e-08 & 8.7e-08 & 0.0     & 224        \\
 0.8 & 78 (0.69\,s) & 75 (2.7\,s) & 2.4e-17 & 2.4e-17 & 0.0     & 297        \\
\bottomrule
\end{tabular}}
%
%
%
%
\end{adjustbox}
\label{tab:small-k-red}
\end{subtable}
\hfill
\begin{subtable}[t]{0.49\textwidth}
\caption{error-driven reduction}
\begin{adjustbox}{center}
\setlength{\tabcolsep}{4pt}
\scalebox{0.76}{
\begin{tabular}{l|rrll@{\hspace{0.9mm}}rl}
\toprule
                             &                                  &                   & \mc1c{\textbf{Error}} & \mc2c{\textbf{Exact}} & \mc1c{\textbf{Traffic}}  \\[-0.8mm]
\mc1{c|}{$\boldsymbol{\epsilon}$}   & \mc1c{$\boldsymbol{|\amalapp|}$} & \mc1c{$\boldsymbol{|\amalp|}$} & \mc1c{\textbf{bound}} & \mc2c{\textbf{error}} & \mc1c{\textbf{error}}    \\
 \midrule
 0.08  & 3  & 3  & 0.0724  & 0.0724  && 0.0720 \\
 0.07  & 4  & 4  & 0.0700  & 0.0700  && 0.0683 \\
 0.04  & 35 & 32 & 0.0267  & 0.0212  && 0.0036 \\
 0.02  & 36 & 33 & 0.0105  & 0.0096  && 0.0032 \\
 0.001 & 41 & 38 & 0.0005  & 0.0005  && 0.0003 \\
 1e-04 & 47 & 41 & 7.7e-05 & 7.7e-05 && 1.2e-05 \\
 1e-05 & 51 & 47 & 6.6e-06 & 6.6e-06 && 0.0 \\
\bottomrule
\end{tabular}}
\end{adjustbox}
\label{tab:small-e-red}
\end{subtable}
\label{tab:mal}
\end{table}
\vspace{-2.0mm}
\subsection*{Approximation errors}
\vspace{-1.5mm}
%
%

\begin{wraptable}[13]{r}{7cm}
\vspace*{-8.5mm}
  \caption{Results for the
    \texttt{http-attacks} regex,
    size-driven reduction,
    $|\aatt| = 142$,
    $|\aattred| = 112$,
    $\timeof{\reducetool{}} = 7.9$\,s,
    $\timeof{\labfuncSL^2} = 28.3$\,min,
     $\timeof{\text{Exact}} = {}$ 14.0--16.4\,min.
  }
\vspace*{-1.5mm}
\begin{adjustbox}{center}
\setlength{\tabcolsep}{4pt}
\hspace*{-4mm}
\scalebox{0.76}{
\begin{tabular}{l|rrlrlr}
\toprule
                             &                                  &                   & \mc1c{\textbf{Error}} & \mc1c{\textbf{Exact}} & \mc1c{\textbf{Traffic}}  \\[-0.8mm]
  \mc1{c|}{$\boldsymbol{k}$}   & \mc1c{$\boldsymbol{|\aattapp|}$} & \mc1c{$\boldsymbol{|\aattp|}$} & \mc1c{\textbf{bound}} & \mc1c{\textbf{error}} & \mc1c{\textbf{error}} \\
 \midrule
 0.1 & 11 (1.1s)  & 5 (0.4s)  & 1.0     & 0.9972  & 0.9957  \\
 0.2 & 22 (1.1s)  & 14 (0.6s) & 1.0     & 0.8341  &  0.2313  \\
 0.3 & 33 (1.1s)  & 24 (0.7s) & 0.081   & 0.0770   & 0.0067  \\
 0.4 & 44 (1.1s)  & 37 (1.6s) & 0.0005  & 0.0005  & 0.0010  \\
 0.5 & 56 (1.1s)  & 49 (1.2s) & 3.3e-06 & 3.3-06  & 0.0010  \\
 0.6 & 67 (1.1s)  & 61 (1.9s) & 1.2e-09 & 1.2e-09 & 8.7e-05 \\
 0.7 & 78 (1.1s)  & 72 (2.4s) & 4.8e-12 & 4.8e-12 & 1.2e-05 \\
 0.9 & 100 (1.1s) & 93 (4.7s) & 3.7e-16 & 1.1e-15 & 0.0     \\
\bottomrule
\end{tabular}}
\end{adjustbox}
\label{tab:med-k-red} 
\end{wraptable}

Table~\ref{tab:mal} presents the results of the self-loop
reduction for the NFA~$\amal$ describing regexes from \texttt{http-malicious}.
We can observe that the differences between the upper bounds on the
probabilistic distance and its real value  are negligible (typically in
the order of $10^{-4}$ or less).
We can also see that the probabilistic distance agrees with the traffic error.
This indicates a good quality of the traffic model employed in the reduction
process.
Further, we can see that our approach can provide useful trade-offs between the
reduction error and the reduction factor.
Finally, Table~\ref{tab:small-e-red} shows that a~significant reduction is
obtained when the error threshold~$\epsilon$ is increased from~0.04 to~0.07.

Table~\ref{tab:med-k-red} presents the results of the size-driven self-loop
reduction for NFA~$\aatt$ describing \texttt{http-attacks} regexes.
We can observe that the error bounds provide again a very good approximation of
the real probabilistic distance.
On the other hand, the difference between the probabilistic distance and the
traffic error is larger than for~$\amal$.
Since all experiments use the same probabilistic automaton and the same
traffic, this discrepancy is accounted to the different set of packets that
are incorrectly accepted by~$\aattred$.
If the probability of these packets is adequately captured in the traffic
model, the difference between the distance and the traffic error is small and
vice versa.
This also explains an even larger difference in Table~\ref{tab:big} (presenting
the results for $\abac$ constructed from \texttt{http-backdoor} regexes) for
$k \in \range{0.2}{0.4}$.
Here, the traffic error is very small and caused by a  small set of packets
(approx.~70), whose probability is not correctly captured in the
traffic model.
Despite this problem, the results clearly show that our approach still
provides significant reductions while keeping the traffic error small:
about a 5-fold reduction is obtained for the traffic error 0.03\,\% and
a~10-fold reduction is obtained for the traffic error 6.3\,\%.
We discuss the practical impact of such a reduction in \S\ref{sec:res}.



\subsection*{Performance of the approximate reduction}

\begin{wraptable}[11]{r}{7.0cm}
\vspace*{-8.5mm}
  \caption{Results for
    \texttt{http-backdoor},
    size-driven reduction,
    $|\abac| = 1,352$,
    $\timeof{\labfuncSL^2} = 19.9$\,min,
    $\lutsof{\abacred} = 2,266$.
  }
  \label{tab:big}
\vspace*{-3mm}
\begin{adjustbox}{center}
\setlength{\tabcolsep}{4pt}
\hspace*{-4mm}
\scalebox{0.76}{
\begin{tabular}{l|rr@{\hspace{0.9mm}}rllr}
\toprule
                             &&                                  &                   & \mc1c{\textbf{Error}} & \mc1c{\textbf{Traffic}}  \\[-0.8mm]
  \mc1{c|}{$\boldsymbol{k}$}   & \mc1c{$\boldsymbol{|\abacapp|}$} & \mc2c{$\boldsymbol{|\abacp|}$} & \mc1c{\textbf{bound}} & \mc1c{\textbf{error}} & \mc1c{\textbf{LUTs}}    \\
 \midrule
 0.1  & 135  (1.2\,m) & 8   & (2.6\,s)  & 1.0     & 0.997      &  202 \\
 0.2  & 270  (1.2\,m) & 111 & (5.2\,s)  & 0.0012  & 0.0631     &  579 \\
 0.3  & 405  (1.2\,m) & 233 & (9.8\,s)  & 3.4e-08 & 0.0003     &  894 \\
 0.4  & 540  (1.3\,m) & 351 & (21.7\,s) & 1.0e-12 & 0.0003     & 1063 \\
 0.5  & 676  (1.3\,m) & 473 & (41.8\,s) & 1.2e-17 & 0.0        & 1249 \\
 0.7  & 946  (1.4\,m) & 739 & (2.1\,m)  & 8.3e-30 & 0.0        & 1735 \\
 0.9  & 1216 (1.5\,m) & 983 & (5.6\,m)  & 1.3e-52 & 0.0        & 2033 \\
\bottomrule
\end{tabular}}
\end{adjustbox}
\end{wraptable}

In all our experiments (Tables 1--3),  we can observe that the most time-consuming
step of the reduction process is the computation of state labellings (it takes
at least 90\,\% of the total time).
The crucial observation is that the structure of the NFAs fundamentally affects
the performance of this step.
Although after \reducetool{}, the size of $\amal$ is very similar
to the size of $\aatt$, computing $\labfuncSL^2$ takes more time (28.3\,min
vs.~38.7\,s).
The key reason behind this slowdown is the determinisation (or alternatively
disambiguation) process required by the product construction underlying the
state labelling computation (cf.~\S\ref{sec:self-loop}).
For $\aatt$, the process results in a~significantly larger product when compared
to the product for~$\amal$.
The size of the product directly determines the time and space complexity of
solving the linear equation system required for computing the state labelling.

As explained in \S\ref{sec:reductions}, the computation of the state labelling
$\labfuncSL^2$ can exploit the ``tentacle'' structure of the NFAs appearing in
NIDSes and thus can be done component-wise.
On the other hand, our experiments reveal that the use of \reducetool{}
typically breaks this structure and thus the component-wise computation cannot
be effectively used.
For the NFA $\amal$, this behaviour does not have any major performance impact
as the determinisation leads to a moderate-sized automaton and the state
labelling computation takes less than~40\,s.
On the other hand, this behaviour has a dramatic effect for the NFA~$\aatt$.
By disabling  the initial application of \reducetool{} and thus preserving the
original structure of $\aatt$, we were able to speed up the state label
computation from~28.3\,min to~1.5\,min.
Note that other steps of the approximate reduction took a~similar time as
before disabling \reducetool{} and also that the trade-offs between the error
and the reduction factor were similar.
Surprisingly, disabling \reducetool{} caused that the computation of the exact
probabilistic distance became computationally infeasible because the
determinisation ran out of memory.

Due to the size of the NFA $\abac$, the impact of disabling the initial
application of \reducetool{}  is even more fundamental.  In particular,
computing the state labelling took only 19.9\,min, in contrast to running out of
memory when the \reducetool{} is applied in the first~step (therefore, the
input automaton is not processed by \reducetool{} in Table~\ref{tab:big}; we
still give the number of LUTs of its reduced version for comparison, though).
Note that the size
of $\abac$ also slows down other reduction steps (the greedy algorithm and the
final \reducetool{} reduction).  We can, however, clearly see that computing the
state labelling is still the most time-consuming step.

\vspace{-0.0mm}
\subsection{The Real Impact in an FPGA-Accelerated NIDS}\label{sec:res}
\vspace{-0.0mm}
%
Further, we also evaluated some of the obtained automata in the setting
of~\cite{Matousek2016} implementing a~high-speed NIDS pre-filter.
In that setting, the amount of resources available for the regex matching
engine is 15,000 LUTs%
\footnote{We omit the analysis of flip-flop consumption because in our setting
it is dominated by the LUT consumption.}
and the frequency of the engine is~200\,MHz.
We synthesized NFAs that use a~32-bit-wide data path,
corresponding to processing 4 ASCII characters at once, which is---according to
the analysis in~\cite{Matousek2016}---the best trade-off between the utilization
of the chip resources and the maximum achievable frequency.
%
%
%
A~simple analysis shows that the throughput of one automaton is 6.4\,Gbps, so
in order to reach the desired link speed of 100\,Gbps, 16~units are required,
and 63~units are needed to handle 400\,Gbps.
With the given amount of LUTs, we are therefore bounded by 937 LUTs for
100\,Gbps and 238 LUTs for 400\,Gbps.

We focused on the consumption of LUTs by an implementation of the regex matching
engines for \texttt{http-backdoor}~($\abacred$) and
\texttt{http-malicious}~($\amalred$).
%
\begin{itemize}

  \item \textbf{100\,Gbps}: For this speed, $\amalred$ can be used without
    any approximate reduction as it is small enough to fit in the available
    space.
    On the other hand, $\abacred$ without the approximate reduction is way too
    large to fit (at most 6~units fit inside the available space, yielding the
    throughput of only 38.4\,Gbps, which is unacceptable).
    The column \textbf{LUTs} in Table~\ref{tab:big} shows that
    using our framework, we are able to reduce $\abacred$ such that it uses
    894 LUTs (for $\boldsymbol{k}$ = 0.3), and so all the needed 16 units 
    fit into the FPGA, yielding the throughput over 100\,Gbps and the
    theoretical error bound of a~false positive $\leq 3.4\times
    10^{-8}$ wrt the model~$\phttp$.

  \item \textbf{400\,Gbps}: Regex matching at this speed is extremely
    challenging.
    The only reduced version of $\abacred$ that fits in the available space is
    the one for the value $\boldsymbol{k} = 0.1$ with the error bound almost 1.
    The situation is better for $\amalred$.
    In the exact version, at most 39~units can fit inside the FPGA with the
    maximum throughput of 249.6\,Gbps.
    On the other hand, when using our approximate reduction framework,
    we are able to place 63~units into the FPGA, each of the size 224 LUTs
    ($\boldsymbol{k}$~=~0.6) with the throughput over 400\,Gbps and the
    theoretical error bound of a~false positive $\leq
    8.7\times10^{-8}$ wrt the model~$\phttp$.

\end{itemize}


\vspace{-2.5mm}
\section{Conclusion}\label{sec:label}
\vspace{-1.5mm}

We have proposed a novel approach for approximate reduction of NFAs used in
network traffic filtering.
Our approach is based on a proposal of a probabilistic distance of the original
and reduced automaton using a probabilistic model of the input network traffic,
which characterizes the significance of particular packets.
We characterized the computational complexity of approximate reductions based on
the described distance and proposed a sequence of heuristics allowing one to
perform the approximate reduction in an efficient way.
Our experimental results are quite encouraging and show that we can often
achieve a very significant reduction for a negligible loss of precision.
We showed that using our approach, FPGA-accelerated network filtering on large
traffic speeds can be applied on regexes of malicious traffic where it could not
be applied before.

In the future, we plan to investigate other approximate reductions of the NFAs,
maybe using some variant of abstraction from abstract regular model checking
\cite{artmc12}, adapted for the given probabilistic setting.
Another important issue for the future is to develop better ways of learning a
suitable probabilistic model of the input traffic.

\paragraph{Data Availability Statement and Acknowledgements.}
The tool used for the experimental evaluation in the current study is available
in the following figshare repository: \url{https://doi.org/10.6084/m9.figshare.5907055}.
We thank Jan Ko\v{r}enek, Vlastimil Ko\v{s}a\v{r},
and Denis Matou\v{s}ek for their help with translating regexes into automata
and synthesis of FPGA designs, and Martin \v{Z}\'{a}dn\'{i}k for providing us
with the backbone network traffic.
We thank Stefan Kiefer for helping us proving the \pspace{} part of
Lemma~\ref{lem:polytime-lang-prob} and Petr Peringer for testing our artifact.
The work on this paper was supported by
the Czech Science Foundation project 16-17538S,
the IT4IXS: IT4Innovations Excellence in Science project (LQ1602),
and the FIT BUT internal project FIT-S-17-4014.


\bibliographystyle{splncs}
\bibliography{bibliography}

\begin{thebibliography}{10}

\bibitem{snort}
{The Snort Team}:
\newblock Snort (\url{http://www.snort.org}).

\bibitem{Becchi2009}
Becchi, M., Wiseman, C., Crowley, P.:
\newblock Evaluating regular expression matching engines on network and general
  purpose processors.
\newblock In: Proceedings of the 5th ACM/IEEE Symposium on Architectures for
  Networking and Communications Systems. ANCS '09, ACM (2009)  30--39

\bibitem{Korenek2007}
Ko\v{r}enek, J., Kobiersk\'{y}, P.:
\newblock Intrusion detection system intended for multigigabit networks.
\newblock In: 2007 IEEE Design and Diagnostics of Electronic Circuits and
  Systems. (April 2007)  1--4

\bibitem{Kastil2009}
Ka\v{s}til, J., Ko\v{r}enek, J., Leng\'{a}l, O.:
\newblock Methodology for fast pattern matching by deterministic finite
  automaton with perfect hashing.
\newblock In: 2009 12th Euromicro Conference on Digital System Design,
  Architectures, Methods and Tools. (2009)  823--829

\bibitem{Matousek2016}
Matou\v{s}ek, D., Ko\v{r}enek, J., Pu\v{s}, V.:
\newblock High-speed regular expression matching with pipelined automata.
\newblock In: 2016 International Conference on Field-Programmable Technology
  (FPT). (2016)  93--100

\bibitem{DBLP:conf/sigcomm/KumarDYCT06}
Kumar, S., Dharmapurikar, S., Yu, F., Crowley, P., Turner, J.S.:
\newblock Algorithms to accelerate multiple regular expressions matching for
  deep packet inspection.
\newblock In: {SIGCOMM}'06, {ACM} (2006)  339--350

\bibitem{DBLP:conf/isca/TanS05}
Tan, L., Sherwood, T.:
\newblock A high throughput string matching architecture for intrusion
  detection and prevention.
\newblock In: {ISCA}'05, {IEEE} Computer Society (2005)  112--122

\bibitem{DBLP:conf/ancs/KumarTW06}
Kumar, S., Turner, J.S., Williams, J.:
\newblock Advanced algorithms for fast and scalable deep packet inspection.
\newblock In: {ANCS}'06, {ACM} (2006)  81--92

\bibitem{DBLP:conf/conext/BecchiC07}
Becchi, M., Crowley, P.:
\newblock A hybrid finite automaton for practical deep packet inspection.
\newblock In: {CoNEXT}'07, {ACM} (2007) ~1

\bibitem{DBLP:conf/ancs/BecchiC07}
Becchi, M., Crowley, P.:
\newblock An improved algorithm to accelerate regular expression evaluation.
\newblock In: {ANCS}'07, {ACM} (2007)  145--154

\bibitem{DBLP:conf/ancs/KumarCTV07}
Kumar, S., Chandrasekaran, B., Turner, J.S., Varghese, G.:
\newblock Curing regular expressions matching algorithms from insomnia,
  amnesia, and acalculia.
\newblock In: {ANCS}'07, {ACM} (2007)  155--164

\bibitem{DBLP:conf/ancs/YuCDLK06}
Yu, F., Chen, Z., Diao, Y., Lakshman, T.V., Katz, R.H.:
\newblock Fast and memory-efficient regular expression matching for deep packet
  inspection.
\newblock In: {ANCS}'06, {ACM} (2006)  93--102

\bibitem{liu_fast}
Liu, C., Wu, J.:
\newblock Fast deep packet inspection with a dual finite automata.
\newblock {IEEE} Trans. Computers \textbf{62}(2) (2013)  310--321

\bibitem{luchaup2014deep}
Luchaup, D., De~Carli, L., Jha, S., Bach, E.:
\newblock Deep packet inspection with {DFA}-trees and parametrized language
  overapproximation.
\newblock In: INFOCOM'14, IEEE (2014)  531--539

\bibitem{mitra_compiling}
Mitra, A., Najjar, W.A., Bhuyan, L.N.:
\newblock Compiling {PCRE} to {FPGA} for accelerating {SNORT} {IDS}.
\newblock In: {ANCS}'07, {ACM} (2007)  127--136

\bibitem{brodie_scalable}
Brodie, B.C., Taylor, D.E., Cytron, R.K.:
\newblock A scalable architecture for high-throughput regular-expression
  pattern matching.
\newblock In: {ISCA}'06, {IEEE} Computer Society (2006)  191--202

\bibitem{clark_efficient}
Clark, C.R., Schimmel, D.E.:
\newblock Efficient reconfigurable logic circuits for matching complex network
  intrusion detection patterns.
\newblock In: {FPL}'03. Volume 2778 of Lecture Notes in Computer Science.,
  Springer (2003)  956--959

\bibitem{hutchings_assign}
Hutchings, B.L., Franklin, R., Carver, D.:
\newblock Assisting network intrusion detection with reconfigurable hardware.
\newblock In: FCCM'02, {IEEE} Computer Society (2002)  111--120

\bibitem{sidhu_fast}
Sidhu, R.P.S., Prasanna, V.K.:
\newblock Fast regular expression matching using fpgas.
\newblock In: {FCCM}'01, {IEEE} Computer Society (2001)  227--238

\bibitem{Pus2011}
Pu\v{s}, V., Tobola, J., Ko\v{s}a\v{r}, V., Ka\v{s}til, J., Ko\v{r}enek, J.:
\newblock Netbench: Framework for evaluation of packet processing algorithms.
\newblock Symposium On Architecture For Networking And Communications Systems
  (2011)  95--96

\bibitem{maletti_hypermini}
Maletti, A., Quernheim, D.:
\newblock Optimal hyper-minimization.
\newblock CoRR \textbf{abs/1104.3007} (2011)

\bibitem{jez_hypermini}
Gawrychowski, P., Jez, A.:
\newblock Hyper-minimisation made efficient.
\newblock In: {MFCS}'09. Volume 5734 of Lecture Notes in Computer Science.,
  Springer (2009)  356--368

\bibitem{ganty_budget}
Gange, G., Ganty, P., Stuckey, P.J.:
\newblock Fixing the state budget: Approximation of regular languages with
  small dfas.
\newblock In: {ATVA}'17. Volume 10482 of Lecture Notes in Computer Science.,
  Springer (2017)  67--83

\bibitem{parker_cj_distance}
Parker, A.J., Yancey, K.B., Yancey, M.P.:
\newblock Regular language distance and entropy.
\newblock CoRR \textbf{abs/1602.07715} (2016)

\bibitem{mohri_edit_distance}
Mohri, M.:
\newblock Edit-distance of weighted automata.
\newblock In: {CIAA}'02. Volume 2608 of Lecture Notes in Computer Science.,
  Springer (2002)  1--23

\bibitem{Jiang1993}
Jiang, T., Ravikumar, B.:
\newblock Minimal {NFA} problems are hard.
\newblock SIAM Journal on Computing \textbf{22}(6) (1993)  1117--1141

\bibitem{malcher_minimizing_2004}
Malcher, A.:
\newblock Minimizing finite automata is computationally hard.
\newblock Theor. Comput. Sci. \textbf{327}(3) (2004)  375--390

\bibitem{hopcroft_nlogn_1971}
Hopcroft, J.E.:
\newblock An {N} log {N} algorithm for minimizing states in a finite automaton.
\newblock Technical report (1971)

\bibitem{paige_three_1987}
Paige, R., Tarjan, R.E.:
\newblock Three partition refinement algorithms.
\newblock {SIAM} J. Comput. \textbf{16}(6) (1987)  973--989

\bibitem{bustan_simulation_2003}
Bustan, D., Grumberg, O.:
\newblock Simulation-based minimazation.
\newblock {{ACM} Trans. Comput. Log.} \textbf{4}(2) (2003)  181--206

\bibitem{Champarnaud2004}
Champarnaud, J., Coulon, F.:
\newblock {NFA} reduction algorithms by means of regular inequalities.
\newblock Theor. Comput. Sci. \textbf{327}(3) (2004)  241--253

\bibitem{mayr_advanced_2013}
Mayr, R., Clemente, L.:
\newblock Advanced automata minimization.
\newblock In: {POPL}'13, {{ACM} Trans. Comput. Log.} (2013)  63--74

\bibitem{etessami_hierarchy}
Etessami, K.:
\newblock A hierarchy of polynomial-time computable simulations for automata.
\newblock In: {CONCUR} 2002 - Concurrency Theory, 13th International
  Conference, Brno, Czech Republic, August 20-23, 2002, Proceedings. Volume
  2421 of Lecture Notes in Computer Science., Springer (2002)  131--144

\bibitem{lorenzo_advanced}
Clemente, L.:
\newblock B{\"{u}}chi automata can have smaller quotients.
\newblock In: {ICALP}'11. Volume 6756 of Lecture Notes in Computer Science.,
  Springer (2011)  258--270

\bibitem{Vardi1985}
Vardi, M.Y.:
\newblock Automatic verification of probabilistic concurrent finite state
  programs.
\newblock SFCS '85, IEEE  327--338

\bibitem{Baier2016}
Baier, C., Kiefer, S., Klein, J., Kl{\"u}ppelholz, S., M{\"u}ller, D., Worrell,
  J.:
\newblock {Markov} chains and unambiguous {B{\"u}chi} automata.
\newblock In: CAV'16, Springer (2016)  23--42

\bibitem{Baier2016A}
Baier, C., Kiefer, S., Klein, J., Kl{\"{u}}ppelholz, S., M{\"{u}}ller, D.,
  Worrell, J.:
\newblock Markov chains and unambiguous {B{\"{u}}chi} automata.
\newblock CoRR \textbf{abs/1605.00950} (2016)

\bibitem{Mohri2012}
Mohri, M.:
\newblock A disambiguation algorithm for finite automata and functional
  transducers.
\newblock In: CIAA'12.
\newblock Springer (2012)  265--277

\bibitem{artifact}
\v{C}e\v{s}ka, M., Havlena, V., Hol\'{i}k, L., Leng\'{a}l, O., Vojnar, T.:
\newblock Approximate reduction of finite automata for high-speed network
  intrusion detection.
\newblock In: Figshare. (2018)
  \url{https://doi.org/10.6084/m9.figshare.5907055}.

\bibitem{tacas18-vm}
Hartmanns, A., Wendler, P.:
\newblock {TACAS} 2018 artifact evaluation {VM}.
\newblock In: Figshare. (2018)
  \url{https://doi.org/10.6084/m9.figshare.5896615}.

\bibitem{Carrasco1994}
Carrasco, R.C., Oncina, J.:
\newblock Learning stochastic regular grammars by means of a state merging
  method.
\newblock In: Proceedings of the Second International Colloquium on Grammatical
  Inference and Applications. ICGI '94, Springer-Verlag (1994)  139--152

\bibitem{Thollard2004}
Thollard, F., Clark, A.:
\newblock Learning stochastic deterministic regular languages.
\newblock In Paliouras, G., Sakakibara, Y., eds.: Grammatical Inference:
  Algorithms and Applications: 7th International Colloquium, ICGI 2004, Athens,
  Greece, October 11-13, 2004. Proceedings, Berlin, Heidelberg, Springer Berlin
  Heidelberg (2004)  248--259

\bibitem{Reduce}
Mayr, R.,  et~al.:
\newblock Reduce: A tool for minimizing nondeterministic finite-word and
  {B\"uchi} automata.
\newblock \url{http://languageinclusion.org/doku.php?id=tools} [Online;
  accessed 2017-09-30].

\bibitem{artmc12}
Bouajjani, A., Habermehl, P., Rogalewicz, A., Vojnar, T.:
\newblock Abstract regular (tree) model checking.
\newblock {STTT} \textbf{14}(2) (2012)  167--191

\bibitem{Csanky1975}
Csanky, L.:
\newblock Fast parallel matrix inversion algorithms.
\newblock In: 16th Annual Symposium on Foundations of Computer Science. (Oct
  1975)  11--12

\bibitem{Fortune1978}
Fortune, S., Wyllie, J.:
\newblock Parallelism in random access machines.
\newblock In: Proceedings of the Tenth Annual ACM Symposium on Theory of
  Computing. STOC '78, New York, NY, USA, ACM (1978)  114--118

\bibitem{Papadimitriou1994}
Papadimitriou, C.M.:
\newblock {Computational complexity}.
\newblock Addison-Wesley (1994)

\bibitem{Hogben2006}
Hogben, L.:
\newblock Handbook of Linear Algebra. 2nd edn.
\newblock CRC Press (2013)

\bibitem{Solodovnikov1985}
Solodovnikov, V.I.:
\newblock Upper bounds on the complexity of solving systems of linear
  equations.
\newblock Journal of Soviet Mathematics \textbf{29}(4) (May 1985)  1482--1501

\end{thebibliography}

\newpage
\appendix

\vspace{-0.0mm}
\section{Proofs of Lemmas}\label{app:proofs}
\vspace{-0.0mm}

Some of the proofs use the PA~$\pexp$ defined as $\pexp = \left(\onevec,
\vecof{\mu}, \left\{\vecof{\mu}_a\right\}_{a \in \Sigma}\right)$ where $\mu =
\frac{1}{|\Sigma| + 1}$.
$\pexp$ models an exponential distribution over the words from~$\Sigma^*$ (wrt
their length).
In particular, $\pexp$~assigns every word $w \in \Sigma^*$ the probability
$\distr_{\pexp}(w) = \mu^{|w| + 1}$.
We use~$\pexp$ to assign every word over~$\Sigma$ a~non-zero probability; any
other PA with the same property would work as well.

\polytimeLangProb*

\begin{proof}
  We prove the first and the second part of the lemma independently.
  \begin{enumerate}
    \item  \emph{Computing $\distrP(\langof A)$ is \pspace{}-complete for an
      NFA $A$.}
The membership in \pspace{} can be shown as follows.
The computation described at the end of~\S\ref{sec:prob-distance} corresponds
to solving a~linear equation system.
The system has an exponential size because of the blowup caused by the
determinisation/disambi\-guation of~$A$ required by the product construction.
The equation system can, however, be constructed by a \pspace{} transducer~$\meq$.
Moreover, as solving linear equation systems can be done using a
polylogarithmic-space transducer~$\msyslin$, one can combine these
two transducers to obtain a~\pspace{} algorithm.
Details of the construction follow:
\begin{itemize}
  \item  First, we construct a~transducer~$\meq$
    that, given an NFA~$A = (Q_A, \trans_A, I_A, F_A)$ and a~PA~$P = (\pinit,
    \pfin, \{\ptransa\}_{a \in \Sigma})$ on its input, constructs a~system of $m
    = 2^{|Q_A|}\cdot|\statesof P|$ linear equations~$\syslin(A, P)$ of~$m$
    unknowns $\unkn{R, p}$ for $R \subseteq Q_A$ and $p \in \statesof P$
    representing the product of $A'$ and $P$, where $A'$ is a~deterministic
    automaton obtained from~$A$ using the standard subset construction.
    The system~$\syslin(A, P)$ is defined as follows (cf.~\cite{Baier2016A}):
    \begin{align*}
      \unkn{R, p} =
      \begin{cases}
        \quad 0 & \hspace*{-13mm} \text{if } \langinof A R \cap \langinof{P'} p = \emptyset,\\
        \quad \!\!\sum\limits_{a \in \Sigma} ~~\sum\limits_{p' \in \statesof P}
        \!\!\!\left(\ptransaof{p}{p'} \cdot \unkn{\trans_A(R, a), p'}\right) + \pfin[p] & \text{if } R \cap F_A \neq \emptyset, \\
        \quad \!\!\sum\limits_{a \in \Sigma} ~~\sum\limits_{p' \in \statesof P}
        \!\!\!\ptransaof{p}{p'} \cdot \unkn{\trans_A(R, a), p'} & \text{otherwise},
      \end{cases}
    \end{align*}
    such that $P' = \suppof P$ and $\trans_A(R, a) = \bigcup_{r \in R}
    \trans(r,a)$.
    The test $\langinof A R \cap \langinof{P'} p = \emptyset$ can be performed
    as testing $\exists r_i \in R: \langinof A r \cap \langinof{P'} p =
    \emptyset$, which can be done in polynomial time.

    It holds that $\distrP(\langof A) = \sum_{p \in \statesof
    P}\pinit[p] \cdot \unkn{I_A,p}$.
    Although the size of $\syslin(A, P)$ (which is the output
    of~$\meq$) is exponential to the size of the input of~$\meq$, the internal
    configuration of~$\meq$ only needs to be of polynomial size, i.e., $\meq$
    works in~\pspace{}.
    Note that the size of each equation is at most polynomial.

  \item  Given a system $\syslin$ of $m$~linear equations of $m$~unknowns,
    solving~$\syslin$ can be done in the time $\bigO(\log^2 m)$ using
    $\bigO(m^k)$ number of processors for
    a~fixed~$k$~\cite[Corollary~2]{Csanky1975} (i.e., it is in the
    class~\nc{}).\footnote{%
      We use $\log k$ to denote the base-2 logarithm of~$k$.
      }
  \item  According to~\cite[Lemma~1b]{Fortune1978}, a~$\bigO(\log^2 m)$
    time-bounded parallel machine can be simulated by a~$\bigO(\log^4 m)$
    space-bounded Turing machine.
    Therefore, there exists a~$\bigO(\log^4 m)$ space-bounded Turing
    machine~$\msyslin$ solving a~system of $m$~linear equations of
    $m$~unknowns.

  \item  As a~consequence, $\msyslin$ can solve $\syslin(A, P)$ using the space
    $\bigO(\log^4 (2^{|Q_A|}\cdot|\statesof P|)) = \bigO(\log^4 2^{|Q_A|}
    + \log^4 |\statesof P|)) = \bigO(|Q_A|^4 + \log^4 |\statesof P|))$.

  \item  The missing part is how to combine $\meq$ and $\msyslin$ to avoid using
    the exponential-size output tape of~$\meq$.
    For this, we use the following standard
    technique~\cite[Proposition~8.2]{Papadimitriou1994}.

    We start simulating~$\msyslin$.
    When $\msyslin$ moves its head right, we pause it and start simulating
    $\meq$ until it outputs the corresponding bit, which is directly fed
    into the input of~$\msyslin$.
    Then we pause~$\meq$ and resume the run of~$\msyslin$.
    We use a binary counter to keep the track of the position~$k$ of the head
    of~$\msyslin$ on the tape.
    When~$\msyslin$ moves the head right, we increment the value of the counter
    to $k+1$ and let~$\meq$ run until it produces another bit of the tape, when
    $\msyslin$ moves the head left, we decrement the value of the counter
    to $k-1$, restart~$\meq$, and let it run until it has produced the
    $(k-1)$-st bit of the tape.

    The internal configuration of both~$\meq$ and~$\msyslin$ is of a~polynomial
    size and the overhead of keeping track of the position of the head
    of~$\msyslin$  also requires only polynomial space.
    Therefore, the whole transducer runs in a~polynomially-bounded space.

\end{itemize}

The \pspace{}-hardness is obtained by a~reduction from the (\pspace{}-complete)
universality of NFAs: using the PA~$\pexp$ defined above, it holds that
$L(A) = \Sigma^*$ iff $\distr_{\pexp}(L(A)) = 1$.

    \item  \emph{Computing $\distrP(\langof A)$ is in \ptime{} if $A$ is a UFA.}
We modify the proof from~\cite{Baier2016A} into our setting.
First, we formally define the \emph{product} of a~PA~$P = (\pinit, \pfin,
\{\ptransa\}_{a\in\Sigma})$ and an~NFA~$A = (Q, \delta, I, F)$, denoted $P
\pprod A$, as the $(|\statesof P| \cdot |Q|)$-state PPA $R = (\pinit_R, \pfin_R,
\{\ptransa^R\}_{a\in\Sigma})$
where%
\footnote{we assume an implicit bijection between states of the product $R$
and $\{1, \ldots, |Q[R]|\}$}
\begin{align*}
\pinit_R[(q_P, q_A)] = \pinit_R[q_P] \cdot |\{q_A\} \cap I|, &&&&
\pfin_R[(q_P, q_A)] = \pfin_R[q_P] \cdot |\{q_A\} \cap F|,
\end{align*}
\begin{align*}
\ptransa^R[(q_P, q_A), (q_P', q_A')] = \ptransaof{q_P}{q_P'}
\cdot |\{q_A'\} \cap \trans(q_A, a)|.
\end{align*}
Note that $R$ is not necessarily a~PA any more because for $w \in \Sigma^*$ such
that $\distrP(w) > 0$,
\begin{inparaenum}[(i)]
  \item  if $w \notin \langof A$, then $\distr_R(w) = 0$ and
  \item  if $w \in \langof A$ and $A$ can accept $w$ using $n$~different runs,
    then $\distr_R(w) = n \cdot \distrP(w)$.
\end{inparaenum}
As a~consequence, the probabilities of all words from~$\Sigma^*$ are no longer
guaranteed to add up to~1.
If $A$ is unambiguous, the second issue is avoided and $R$~preserves the
probabilities of words from $\langof A$, i.e., $\distr_R(w) = \distrP(w)$
for all
$w \in \langof A$, so $R$ can be seen as the restriction of $\distrP$ to
$\langof{A}$.
We assume~$R$ is trimmed.

In order to compute $\distrP(\langof{A})$,
we construct a~matrix $\bE$ defined as $\sum_{a\in\Sigma} \ptrans^R_a$.
%
%
%
%
%
Note that $\srof \bE < 1$, where $\srof \bE$ is the \emph{spectral radius}
of $\bE$.
Intuitively, $\srof \bE <~1$ holds because we trimmed the redundant states from the product of $P$ 
and $A$, the significance of paths between two nodes of length $n$ for $n \to \infty$ 
is 0 (for a full proof see~\cite{Baier2016A}).
We further use the following standard result in linear algebra:
if $\srof \bE < 1$, then
\begin{inparaenum}[(i)]
 \item  the matrix $\bI - \bE$ is invertible and
 \item  the sum of powers of $\bE$, denoted as $\bE^*$, can be computed as
   $\bE^* = \sum_{i=0}^{\infty} \bE^i = (\bI - \bE)^{-1}$~\cite{Hogben2006}
   (note that matrix inversion can be done in polynomial
   time~\cite{Solodovnikov1985}).
\end{inparaenum}

$\bE^*$ represents the reachability between nodes of
$R$, i.e., $\bE^*[r,r']$ is the sum of significances of all
(possibly infinitely many) paths from~$r$ to~$r'$ in $R$.
When related to $P$ and $A$, the matrix $\bE^*$ represents the reachability
in~$P$ wrt~$\langof A$, in particular,
\begin{equation}\label{eq:star-significance}
\bE^*[(q_P, q_A), (q_P', q_A')] =
\sum \left\{\ptrans_w[q_P, q'_P]~\middle|~ q_A \runstoover{w}{q'_A}, w \in \Sigma^*\right\}.
\end{equation}
We prove Equation~(\ref{eq:star-significance}) using the following reasoning.
First, we show that
%
\begin{equation}\label{eq:reach-n}
  \bE^n[(q_P, q_A), (q_P', q_A')] =
\sum \left\{\ptrans_w[q_P, q'_P]~\middle|~ q_A \runstoover{w}{q'_A}, w \in \Sigma^n\right\},
\end{equation}
i.e., $\bE^n$ represents the reachability in $P$ wrt $\langof A$ for words of length $n$.
We prove Equation~\eqref{eq:reach-n} by induction on $n$: for $n = 0$ and $n =
1$ the equation follows directly from the definition of~$R$ and $\ptrans$.
Next, suppose that~\eqref{eq:reach-n} holds for $n > 1$;
we show that it holds also for $n+1$.
We start with the following reasoning:
\begin{align*}
  \bE^{n+1}&[(q_P, q_A), (q_P', q_A')] \\
  {}={}&(\bE^n \bE)[(q_P, q_A), (q_P', q_A')] \\
  {}={}&\sum \Big\{\bE^{n}[(q_P, q_A), (q_P'', q_A'')] \cdot
  \bE[(q_P'', q_A''), (q_P', q_A')]~\Big|~ (q_P'', q_A'')\in\statesof R\Big\}.
\end{align*}
Note that the last line is obtained via definition of matrix multiplication.
Further, using the induction hypothesis, we get
\begin{align*}
  \bE^{n+1}&[(q_P, q_A), (q_P', q_A')] \\ 
  {}={}& \sum \Bigg\{\sum \Big\{\ptrans_w[q_P, q''_P]~\Big|~ q_A \runstoover{w}{q''_A}, w \in \Sigma^n\Big\} \cdot {} \\ 
  &{}\cdot\sum \Big\{\ptrans_a[q_P'', q'_P]~\Big|~ q_A'' \ltr{a}{q'_A}, a \in \Sigma\Big\}~\Bigg|~ (q_P'', q_A'')\in\statesof R\Bigg\} \\
  {}={}&\sum \Bigg\{\sum \Big\{\ptrans_w[q_P, q''_P]\cdot\ptrans_a[q_P'', q'_P]~\Big|~ q_A \runstoover{w}{q''_A}, q_A'' \ltr{a}{q'_A}, \\ 
  &\hspace*{19mm} a \in \Sigma, w \in \Sigma^n\Big\}~\Bigg|~ (q_P'', q_A'')\in\statesof R\Bigg\} \\
  {}={}& \sum \Big\{\ptrans_{w'}[q_P, q'_P]~\Big|~ q_A \runstoover{w'}{q'_A}, w' \in \Sigma^{n+1}\Big\}.
\end{align*}
%
%
%
Since $\bE^* = \sum_{i=0}^\infty \bE^i$, Equation~\eqref{eq:star-significance}
follows.
Using the matrix~$\bE^*$, it remains to compute $\distrP(\langof{A})$ as
\begin{equation*}\label{eq:pr-comp-final}
  \distrP(\langof{A}) = \pinit^\top_R \cdot \bE^* \cdot \pfin_R.
\end{equation*}
\qed
%
%
%
%
  \end{enumerate}
\end{proof}


\generalPspace*

\begin{proof}
Membership in \pspace{}: We non-deterministically
generate an automaton~$A'$ with $n$~states and test (in \pspace{}, as shown in
Lemma~\ref{lem:polytime-lang-prob}) that $\distanP(A, A') \leq \epsilon$.
This shows the problem is in \npspace{} $=$ \pspace{}.

\pspace{}-hardness: We use a reduction from the problem of checking universality
of an~NFA $A = (Q, \trans, I, F)$ over $\Sigma$, i.e., from checking
whether $\langof A = \Sigma^*$, which is \pspace{}-complete.
%
First, for a reason that will become clear below, we test whether $A$ accepts
all words over $\Sigma$ of length~0 and 1, which can be done in polynomial time.
We can now show that $\langof A = \Sigma^*$ iff there is a~1-state NFA~$A'$
s.t.~$\distanPexp(A, A') \leq 0$ where $\pexp$ is as defined at the beginning of
Appendix.
The implication from left to right is clear: such an NFA can only be (up to
  isomorphism) the NFA $A' = (\{q\}, \{q \ltr{a} q \mid a \in \Sigma \}
  ,\{q\},\{q\})$).
To show the reverse implication, we note that we have tested that $\{\epsilon\}
  \cup \Sigma \subseteq \langof A$.
Since the probability of any word from $\{\epsilon\} \cup \Sigma
\subseteq \langof A$ in $\pexp$ is non-zero, the only 1-state NFA that processes
those words with zero error is the NFA~$A'$ defined above.
Because the language of $A'$ is $\langof{A'} = \Sigma^*$, it holds that
$\distanPexp(A, A') \leq 0$ iff $\langof A = \Sigma^*$.
%
%
\qed
\end{proof}

\pruningPspace*

\begin{proof}
Membership in \pspace{}: We non-deterministically generate a subset $R$ 
of $Q[A]$ having $n$~states and test (in \pspace{}, as shown in
Lemma~\ref{lem:polytime-lang-prob}) that $\distanP(A, \restr A R) \leq \epsilon$.
This shows the problem is in \npspace{} $=$ \pspace{}.

\pspace{}-hardness:
We reduce from the \pspace{}-complete problem of checking universality of an
NFA~$A = (Q, \trans, I, F)$ over~$\Sigma$.
Consider a~symbol $x\notin\Sigma$.
Let us construct an NFA $A'$ over $\Sigma \cup \{x\}$ s.t.
$L(A') = x^*.L(A)$.
$A'$ is constructed by adding a~fresh state $\qnew$ to~$A$ that can
loop over~$x$ and make a~transition to any initial state of~$A$ over~$x$:
$A' = (Q \dunion \{\qnew\}, \trans \cup \{\qnew
\ltr{x} q \mid q \in I \cup \{\qnew\}\}, I \cup \{\qnew\}, F)$.
We set $n = |A'| + 1$.
Further, we also construct an $(n+1)$-state NFA $B$ accepting the 
language $x^n.\Sigma^*$ defined as $B = (Q_B, \trans_B, \{q_1\}, \{q_{n + 1}\})$ 
where $Q_B = \{q_1, \ldots, q_{n+1}\}$ and $\trans_B = \{q_i \ltr{x} q_{i+1} \mid 1 
\leq i \leq n\} \cup \{q_{n+1} \ltr{a} q_{n+1} \mid a \in \Sigma\}$.
Moreover, let $P$ be a~PA representing
a~distribution $\distrP$ that is defined for each $w\in(\Sigma\cup\{x\})^*$ as
\begin{equation}
 \distrP(w) = \begin{cases}
   \mu^{|w'| + 1} & \text{for~} w = x^n.w', w'\in\Sigma^*, \text{and~} \mu = \frac{1}{|\Sigma| + 1}, \\
   0 & \mbox{otherwise}.
 \end{cases}
\end{equation}
Note that $\distrP(x^n . w) = \distr_{\pexp}(w)$ for $w \in \Sigma^*$ and
$\distrP(u) = 0$ for $u \notin x^n.\Sigma^*$ ($P$ can be easily constructed
from $\pexp$).
Also note that $B$ accepts exactly those word~$w$ such that $\distrP(w) \neq
0$ and that $\distrP(\langof B) = 1$.
Using the automata defined above, we construct an NFA $C = A' \cup B$
where the union of
two NFAs is defined as $A_1 \cup A_2 = (\statesof {A_1} \dunion \statesof
{A_2}, \transof{A_1} \dunion \transof{A_2},$    
$\initof{A_1} \dunion \initof{A_2}, \finof{A_1} \dunion \finof{A_2})$.
The language of $C$ is $\langof C  = x^*. \langof A \cup x^n . \Sigma^*$ and its
probability is $\distrP(\langof C) = 1$.

The important property of~$C$ is that if there exists a~set 
$R \subseteq Q[C]$ of the size $|R| = n$ s.t. $\distanP(C, \restr C R) \leq 0$, 
then $L(A) = \Sigma^*$.
The property holds because
since $|Q[A']| = n-1$, when we remove $n$ states from~$C$, at least one state
from $Q[B]$ is removed, making the whole subautomaton of~$C$ corresponding
to~$B$ useless, and, therefore, $\langof{\restr C R} \subseteq x^* . \langof
A$.
Because $\distanP(C, \restr C R) \leq 0$, we know that $\distrP(\langof{\restr C
R}) = 1$, so $x^n . \Sigma^* \subseteq x^* . \langof A  = \langof{\restr C R}$
and, therefore, $\langof A = \Sigma^*$.
For the other direction, if $\langof A = \Sigma^*$, then there exists a set $R
\subseteq \statesof{A'} \cup \statesof B$ of the size $|R| = n$
s.t.~$\distanP(C, \restr C R) \leq 0$ (in particular, $R$ can be such that $R
\subseteq \statesof B$).
\qed
\end{proof}

\pruningCorrect*

\begin{proof}
First (since we will use it in the proof of the first part), for each $q\in
Q[A]$ we prove the inequalities $\stlabP^1(q) \geq \stlabP^2(q) \geq
\stlabP^3(q)$. 
The first inequality follows from the fact that if the banguages of reachable
final states are not disjoint, in the case of $\stlabP^1$ we may sum
probabilities of the same words multiple times.
The second inequality follows from the inclusion
$\blanginof{A}{q}.\langinof{A}{q} \subseteq \blanginof{A}{F \cap
\reachof{q}}$.

Second, we prove that the functions $\reduceP$, $\errfuncP$, and
$\labfuncP^x$ satisfy the properties of~\cone{}:
\begin{itemize}
  \item \cone{}(a): To show that $\errfuncPof A V {\labfuncP^x(A, P)} \geq
    \distanP(A, \reducePof{A}{V})$, we prove that the inequality holds
    for~$\stlabP^3 = \labfuncP^3(A, P)$; the rest follows from $\stlabP^1(q)
    \geq \stlabP^2(q) \geq \stlabP^3(q)$ proved above.

    Consider some set of states $V \subseteq
    Q[A]$ and the set $V'\in \downclosPof V$ s.t.~for any $V'' \in \downclosPof
    V$, it holds that $\sum\{ \stlabP^3(q) \mid q \in V'\} \leq \sum\{
    \stlabP^3(q) \mid q \in V''\}$.
    We have
    \begin{equation}\label{eq:pr-err-subset}
     \begin{aligned}
       \langof A \symdiff {} &\langof{\reducePof{A}{V}} \\
       &=\langof A \symdiff \langof{\reducePof{A}{V'}} &
       \expl{def.~of~$\ordP$}\\ 
       &= \langof A \setminus \langof{\reducePof{A}{V'}} &
       \expl{$\langof A \supseteq \langof{\reducePof{A}{V'}}$}\\ 
       &\subseteq \bigcup_{q \in V'}\blanginof{A}{q}.\langinof{A}{q}.
       &\expl{def.~of $\reduceP$}
     \end{aligned}
    \end{equation}
    Finally, using~(\ref{eq:pr-err-subset}), we obtain
    \begin{equation}
      \begin{aligned}
       \distanP(A,&\reducePof{A}{V}) \\
        {}={}& \distrP(L(A) \setminus L(\reducePof{A}{V'})) \\
        &\expl{def.~of~$\distanP$ and $\langof A \supseteq \langof{\reducePof{A}{V'}}$}\\ 
        {}\leq{}& \sum_{q \in V'}\distrP(\blanginof{A}{q}.\langinof{A}{q}) &
        \expl{(\ref{eq:pr-err-subset})}\\
        {}={}& \sum\{ \stlabP^3(q) \mid q \in V'\} &
        \expl{def.~of~$\stlabP^3$}\\
        {}={}& \hspace*{-3mm}\min_{V'' \in \downclosPof V}\hspace*{-1mm}
        \sum\{ \stlabP^3(q) \mid q \in V''\} &
        \expl{def.~of~$V'$}\\
        {}={}& \errfuncPof{A}{V}{\stlabP^3}.
        &\expl{def.~of $\errfuncP$}
      \end{aligned}
    \end{equation}
  \item \cone{}(b): $|\reducePof A {Q[A]}| \leq 1$ because
    $|\reducePof{A}{Q[A]}| = |\trimof{\restr A \emptyset}| = 0$.
  \item \cone{}(c): $\reducePof A {\emptyset} = A$ since
    $\reducePof{A}{\emptyset} = \trimof{\restr A {\statesof A}} = A$ (we assume
    that $A$~is trimmed at the input).
    \qed
\end{itemize}
\end{proof}

\selfloopPspace*

\begin{proof}
Membership in \pspace{} can be proved in the same way as in the proof of
Lemma~\ref{lem:pruning-pspace}.

\pspace{}-hardness:
We reduce from the \pspace{}-complete problem of checking universality of an
NFA~$A = (Q, \trans, I, F)$.
First, we check whether $I[A] \neq \emptyset$.
We have that $\langof A = \Sigma^*$ iff there exists a set of states 
$R \subseteq Q$ of the size $|R| = |Q|$ such that 
$\distanPexp(A, \selfloopof A R) \leq 0$ (note that this means that a~self-loop
is added to every state of~$A$).
\qed
\end{proof}

\selfLoopCorrect*

\begin{proof}
First, because we use them in the proof of the first part, for each $q\in Q[A]$
we prove the inequalities $\stlabSL^1(q) \geq \stlabSL^2(q) \geq \stlabSL^3(q)$.
We start with the equality $\weight{P}{w} = \distrP(w.\Sigma^*)$ and 
obtain the equality
\begin{equation}
  \sum_{w \in \blanginof{A}{q}}\weight{P}{w} = \sum_{w \in \blanginof{A}{q}}\distrP(w.\Sigma^*),
\end{equation}
which, in turn, implies $\stlabSL^1 = \weight{P}{\blanginof{A}{q}} \geq
\distrP\left(\blanginof{A}{q}.\Sigma^*\right) = \stlabSL^2$.
The previous holds because if, e.g., $\blanginof{A}{q} = \{w, wa\}$ for
$w\in\Sigma^*$ and $a \in \Sigma$, then
$$
\weight P {\blanginof{A}{q}} = \weight P {\{w, wa\}} = \weight P w + \weight
P {wa},
$$
while
$$
\distrP\left(\blanginof{A}{q}.\Sigma^*\right) = \distrP\left(\{w,
wa\}.\Sigma^*\right) = \distrP\left(w.\Sigma^*\right).
$$
The inequality~$\stlabSL^2 \geq \stlabSL^3$ holds trivially.

Second, we prove that the functions $\reduceSL$, $\errfuncSL$, and
$\labfuncSL^x$ satisfy the properties of~\cone{}:
\begin{itemize}
  \item \cone{}(a): To show that $\errfuncSLof A V {\labfuncSL^x(A, P)} \geq
    \distanP(A, \reduceSLof{A}{V})$, we prove that the inequality holds
    for~$\stlabSL^3 = \labfuncSL^3(A, P)$; the rest follows from $\stlabSL^1(q)
    \geq \stlabSL^2(q) \geq \stlabSL^3(q)$ proved above.

    Consider some set of states $V \subseteq Q[A]$ and the set $V' =
    \min(\downclosSLof V)$.
    We can estimate the symmetric difference of the languages of the
    original and the reduced automaton as
    \begin{equation}\label{eq:sl-err-lang}
     \begin{aligned}
       \langof A \symdiff {} &\langof{\reduceSLof{A}{V}}  \\
       {}={}& \langof A \symdiff \langof{\reduceSLof{A}{V'}}
       &\expl{def.~of~$\ordSL$} \\
       {}={}& \langof{\reduceSLof{A}{V'}} \setminus \langof A 
       &\hspace*{-5mm}\expl{$\langof A \subseteq \langof{\reduceSLof{A}{V'}}$} \\
       {}\subseteq{}& \bigcup_{q \in V'}\blanginof{A}{q}.\Sigma^* \setminus
       \bigcup_{q \in V'}\blanginof{A}{q}.\langinof{A}{q}.
       &\expl{def.~of $\reduceSL$}
     \end{aligned}
    \end{equation}
    The last inclusion holds because~$\selfloopof A V$ adds self-loops to the
    states in~$V$, so the newly accepted words are for sure those that traverse
    through~$V$, and they are for sure not those that could be accepted by
    going through~$V$ before the reduction (but they could be accepted without
    touching~$V$, hence the inclusion).
    We can estimate the probabilistic distance of $A$ and $\reduceSLof{A}{V}$ as
    \begin{equation}
    \begin{aligned}
      \distanP(A,& \reduceSLof{A}{V}) \\
      {}\leq{}& \distrP\bigg(\bigcup_{q \in V'}\blanginof{A}{q}.\Sigma^*
      \setminus \bigcup_{q \in V'}\blanginof{A}{q}.\langinof{A}{q}\bigg)
      &\expl{(\ref{eq:sl-err-lang})}\\
      {}={}&\distrP\bigg(\bigcup_{q \in V'}\blanginof{A}{q}.\Sigma^*\bigg) - 
      \distrP\bigg(\bigcup_{q \in V'}\blanginof{A}{q}.\langinof{A}{q}\bigg)\\
      & \expl{property of $\distr$}\\
      {}\leq{}&
      \sum_{q \in V'}\distrP\left( \blanginof{A}{q}.\Sigma^* \right) - 
      \sum_{q \in V'}\distrP\left( \blanginof{A}{q}.\langinof{A}{q} \right) \\
      & \expl{property of $\distr$ and the fact that $\blanginof{A}{q}.\Sigma^*
      \subseteq \blanginof{A}{q}.\langinof{A}{q}$}\\
      {}={}&
      \sum_{q \in V'}\left(\distrP\left( \blanginof{A}{q}.\Sigma^* \right) - 
      \distrP\left( \blanginof{A}{q}.\langinof{A}{q} \right)\right) \\
      {}={}&
      \sum_{q \in V'}\left(\distrP\left( \blanginof{A}{q}.\Sigma^* \right) - 
      \distrP\left( \blanginof{A}{q}.\langinof{A}{q} \right)\right) \\
      {}={}&
      \sum \{\stlabSL^3(q) \mid q \in \min(\downclosSLof{V})\}\\
      &\expl{def.~of $\stlabSL^3$ and $V'$}\\
      {}={}&\errfuncSLof{A}{V}{\stlabSL^3}.
    \end{aligned}
    \end{equation}

  \item \cone{}(b): $|\reduceSLof A {Q[A]}| \leq 1$ because, from the
    definition, $|\reduceSLof{A}{Q[A]}| = |\trimof{\selfloopof A Q[A]}|
    = 1$.
  \item \cone{}(c): $\reduceSLof A {\emptyset} = A$ since
    $\reduceSLof{A}{\emptyset} = \trimof{\selfloopof A \emptyset} = A$ (we
    assume that $A$~is trimmed at the input).\qed
\end{itemize}
\end{proof}


\end{document}